\pgfplotsset{compat=newest}
\definecolor{keywordsColor}{RGB}{97, 0, 71}
\definecolor{commentsColor}{RGB}{54, 54, 54}
\definecolor{backcolour}{rgb}{0.95,0.95,0.92}
\lstdefinestyle{appendix-code-style}{
    basicstyle=\ttfamily\footnotesize
}
\lstdefinelanguage{OurLanguage}{
    alsoletter={:,=, <, >, &, |},
    keywords={while, end, types, if, else, else:, and, or, not},
    morekeywords={=, <, >, <=, >=, ==, !=, &&, ||},
    basicstyle={\ttfamily\small\normalfont},
    keywordstyle={\color{keywordsColor}\ttfamily\bfseries},
    comment=[l]{\#},
    commentstyle={\color{commentsColor}\ttfamily},
    autogobble=true,
    mathescape=true
}
\lstdefinestyle{program}{basicstyle=\small\ttfamily,keywordstyle=\bfseries}
\definecolor{slytherin}{RGB}{42,98,61}
\definecolor{scarlet}{rgb}{1.0, 0.13, 0.0}
\definecolor{ashgrey}{rgb}{0.7, 0.75, 0.71}
\definecolor{algo1}{HTML}{4C0033}
\definecolor{algo2}{HTML}{01024E}
\newcommand{\Polar}{{\texttt{Polar}}}
\newcommand{\prog}{\ensuremath{\mathcal{P}}}
\newcommand{\diffp}{\ensuremath{\partial_p}}
\newcommand{\vars}[1][\prog]{\text{Vars}(#1)}
\newcommand{\params}{\text{Params}(\prog)}
\newcommand{\oset}[3][0ex]{%
  \mathrel{\mathop{#3}\limits^{
    \vbox to#1{\kern-2\ex@
    \hbox{$\scriptstyle#2$}\vss}}}}
\newcommand{\ddepends}[2]{\oset[.5ex]{#1}{\xrightarrow{}}_{#2}}
\newcommand{\depends}[2]{\oset[.3ex]{#1}{\twoheadrightarrow_{#2}}}
\DeclarePairedDelimiter\abs{\lvert}{\rvert}
\let\oldabs\abs
\def\abs{\@ifstar{\oldabs}{\oldabs*}}
\newcommand{\R} {\mathbb{R}}
\newcommand{\N} {\mathbb{N}}
\newcommand{\Q} {\mathbb{Q}}
\newcommand{\E} {\mathbb{E}}
\newcommand{\mom} {\ensuremath{\mathit{Mom}}}
\newcommand{\sens} {\ensuremath{\mathit{Sens}}}
\newcommand{\SRec} {\ensuremath{\mathit{SRec}}}
\newcommand{\MRec} {\ensuremath{\mathit{MRec}}}
\newcommand{\Eqs} {\ensuremath{\mathit{Eqs}}}
\newcommand{\seq}[3][{}]{\langle #2 \rangle_{#3}^{#1}}
\renewcommand{\paragraph}[1]{\par\smallskip \noindent\textit{#1}}
\begin{document}
\title{Automated Sensitivity Analysis for Probabilistic Loops}
\author{Marcel Moosbrugger \orcidlink{0000-0002-2006-3741} \inst{1} \and
Julian M\"ullner \orcidlink{0009-0006-2909-2297} \inst{1}\and
Laura Kov\'acs \orcidlink{0000-0002-8299-2714} \inst{1}}
\authorrunning{M. Moosbrugger et al.}
\institute{TU Wien, Vienna, Austria}
\maketitle              %
\begin{abstract}
We present an exact approach to analyze and quantify the sensitivity of higher moments of probabilistic loops with symbolic parameters, polynomial arithmetic and potentially uncountable state spaces.
Our approach integrates methods from symbolic computation, probability theory, and static analysis in order to automatically capture sensitivity information about probabilistic loops.
Sensitivity information allows us to formally establish how value distributions of probabilistic loop variables influence the functional behavior of loops, which can in particular be helpful when  choosing values of loop variables in order to ensure efficient/expected computations.
Our work uses algebraic techniques to model higher moments of loop variables via linear recurrence equations and introduce the notion of \emph{sensitivity recurrences}.
We show that sensitivity recurrences precisely model loop sensitivities, even in cases where the moments of loop variables do not satisfy a system of linear recurrences.
As such, we enlarge the class of probabilistic loops for which sensitivity analysis was so far feasible.
We demonstrate the success of our approach while analyzing the sensitivities of probabilistic loops. 

\keywords{Probabilistic Programs \and Sensitivity Analysis \and Recurrences}
\end{abstract}

\section{Introduction}

Probabilistic programs are imperative programs enriched with the capability to draw from probability distributions.
By supporting native primitives to model uncertainty, probabilistic programming provides a powerful framework to model stochastic systems from many different areas, 
such as machine learning~\cite{Ghahramani2015}, biology~\cite{unsolvable}, cyber-physical systems~\cite{Chou2020,Selyunin2015}, cryptography~\cite{Barthe2012}, privacy~\cite{Barthe2012a}, and randomized algorithms~\cite{Motwani1995}.

A challenging task in the analysis of probabilistic programs comes from the fact that values, or even value distributions,  of symbolic parameters used within program expressions over probabilistic program variables are often unknown.
Sensitivity analysis aims to quantify how small changes in such parameters influence computation results \cite{Aguirre2021,BartheEGHS18}.
Sensitivity analysis thus provides additional information about the probabilistic program executions, even if some parameters are (partially) unknown. This sensitivity information can further be used, among others, in code optimization: sensitivity information quantifies the influence of parameters on the program variables, allowing to derive cost-effective estimates and optimize expected runtimes of probabilistic loops. 

The sensitivity analysis of probabilistic programs is however hard due to their intrinsic randomness:
program variables are no longer assigned single values but rather hold probability distributions \cite{Barthe2020}.
Uncountably infinite state spaces and non-linear assignments are further obstacles to the formal analysis of probabilistic programs.
In recent years, several frameworks to \emph{manually} reason about the sensitivity of probabilistic programs were proposed~\cite{Aguirre2021,BartheEGHS18,VasilenkoVB22}.
However, the state-of-the-art in \emph{automated} sensitivity analysis mainly focuses on loop-free programs such as \emph{Bayesian networks}~\cite{ChanD02,ChanD04,BartocciKS20A,StankovicBK22} and statically-bounded loops~\cite{HuangWM18}.
The technique presented in \cite{WangFCDX20} supports loops with variable-dependent termination times, but can only verify that the sensitivities obey certain bounds.
To the best of our knowledge, up to now, there is no automated and exact method supporting the sensitivity analysis of (potentially) unbounded probabilistic loops. 

\emph{In this paper, we propose a fully automatic technique for the sensitivity analysis of unbounded probabilistic loops.} The crux of our approach lies within the integration of
 methods from symbolic computation, probability theory and static analysis in order to automatically capture sensitivity information about probabilistic loops.
 Such an integrated framework allows us to also characterize a class of loops for which our technique is sound and complete.

\begin{figure*}[t]
    \begin{subfigure}[b]{0.47\textwidth}
        \centering
        \includegraphics[width=\linewidth]{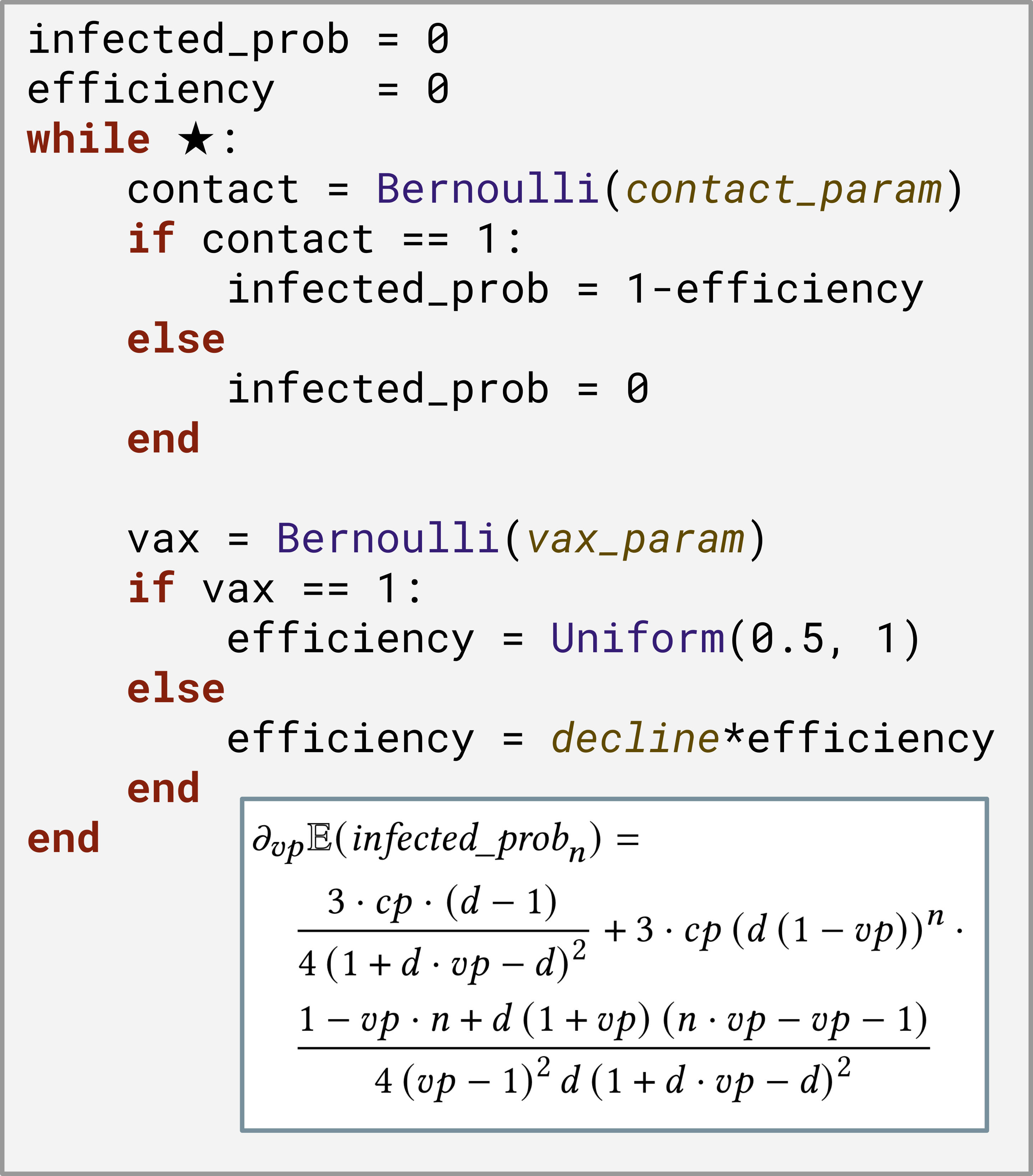}
        \caption{}
        \label{fig:running:1}
    \end{subfigure}
    \hfill %
    \begin{subfigure}[b]{0.4\textwidth} %
        \centering
        \includegraphics[width=\linewidth]{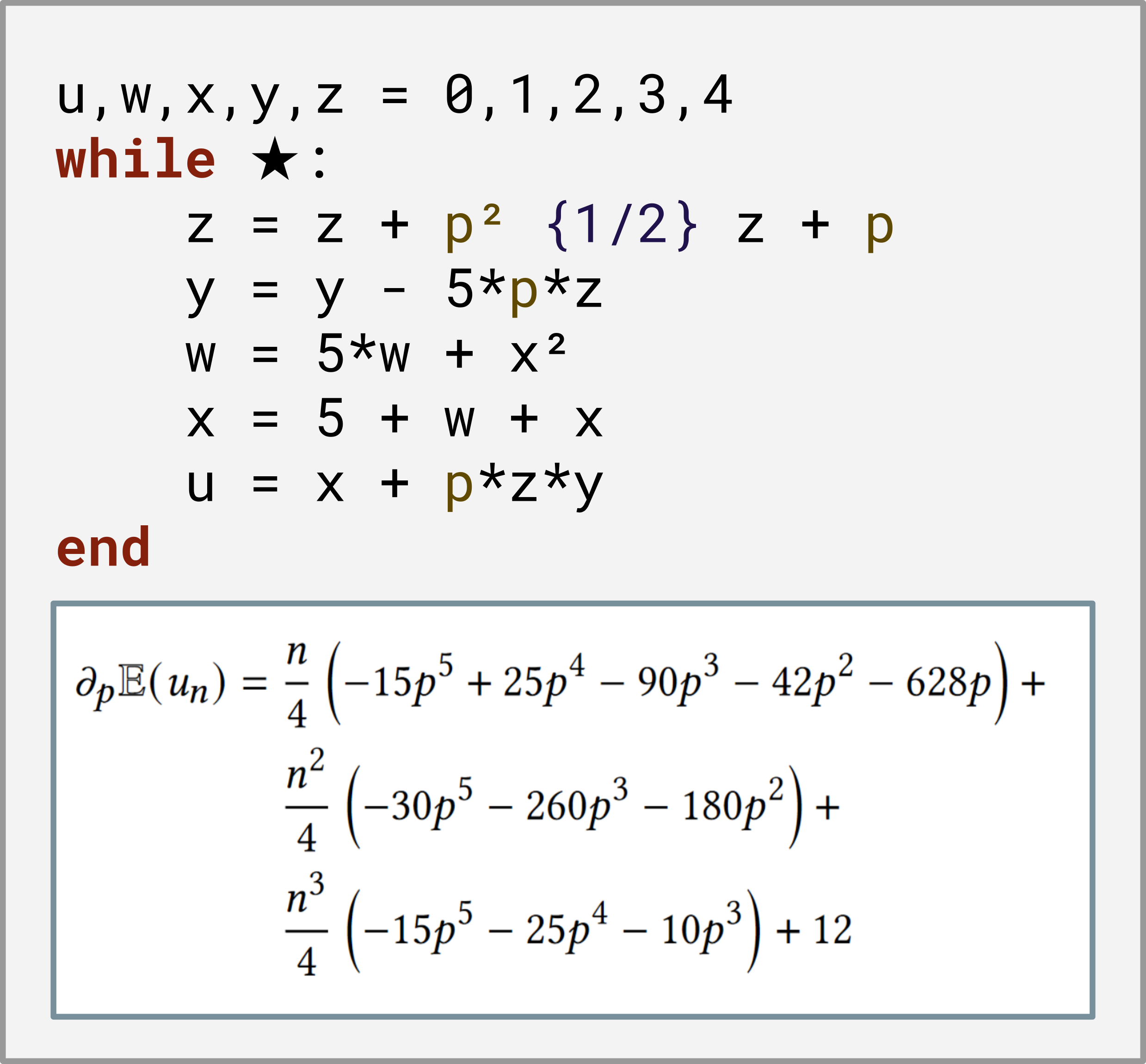}
        \caption{}
        \label{fig:running:2}
    \end{subfigure}
    \caption{Two examples of parameterized probabilistic loops, %
    where our approach automatically derives 
    loop sensitivities $\diffp$ as polynomial expressions depending on the loop counter $n$ and other parameters; for example \texttt{infected\_prob} with respect to \texttt{vax\_param} (Fig.~\ref{fig:running:1}) or that of \texttt{u} with respect to \texttt{p} (Fig.~\ref{fig:running:2}). 
    Using these results, our work shows that,  when assuming \texttt{decline=$0.9$}, \texttt{contact\_param=$0.7$}, after $n=10$ time steps and currently having \texttt{vax\_param=$0.1$}, then a small change $\varepsilon$ in \texttt{vax\_param} will decrease \texttt{infected\_prob} by approximately $1.7 \varepsilon$ in the next time step of Fig.~\ref{fig:running:1}. 
    }
    \label{fig:running-examples}
\end{figure*}

\noindent\paragraph{\bf Our framework for algebraic sensitivity analysis.} 
We advocate the use of algebraic recurrences to model the behavior of probabilistic loops. 
We combine and adjust techniques from symbolic summation, partial derivatives, and probability theory to provide a step towards the exact and automated sensitivity analysis of probabilistic loops, even in the presence of uncountable state spaces and polynomial assignments.
Figure~\ref{fig:running-examples} shows two probabilistic loops for which our work automatically computes the sensitivities of program variables with respect to different parameters. 
For example, Fig.~\ref{fig:running:1} depicts a probabilistic program, modelling the incidence of a disease within a population.
More precisely, it models the probability \texttt{infected\_prob} that a single organism within the population is infected, in dependence on symbolic parameters that model the amount of social interaction (\texttt{contac\_param}), the frequency of vaccinations (\texttt{vax\_param}) and  effect of a vaccination weakening over time (\texttt{decline}).
Sensitivity analysis helps to reason about the influence of these parameters on the disease infection process, answering for example the question ``How will an increase in the rate of vaccinations \texttt{vax\_param} influence the probability \texttt{infected\_prob} of an infection?''. Our work provides an algebraic approach to answering such and similar questions.

In a nutshell, our technique computes exact closed-form solutions for the sensitivities of (higher) moments of program variables for all, possibly infinitely many, loop iterations.
Higher moments are necessary to recover/estimate the value distributions of probabilistic loop variables and hence these moments help in inferring valuable sensitivity information for the variance or skewness.
In our work, we utilize algebraic techniques in probabilistic loop analysis to model moments of program variables with linear recurrences, so-called \emph{moment recurrences}~\cite{polar,BartocciKS19}.
However, {moment recurrences} do not support loops with intricate polynomial arithmetic, such as the loop in Figure~\ref{fig:running:2}.
To overcome this limitation, we propose the notion of \emph{sensitivity recurrences}, which shortcut computing closed-forms for variable moments and directly model sensitivities via linear recurrence equations.
In Figure~\ref{fig:running:2}, the program variable $w$ is independent of the parameter $p$.
By exploiting the independence of program variables from parameters, \emph{sensitivity recurrences} enable the exact sensitivity analysis for loops such as  Figure~\ref{fig:running:2}.
We characterize a class of probabilistic loops for which we prove \emph{sensitivity analysis via sensitivity recurrences} to be sound and complete.

\noindent\paragraph{\bf Our contributions.} We integrate symbolic computation, in particular symbolic summation and partial derivation, in combination with methods from probability theory into the landscape of probabilistic program reasoning. In particular, we argue that recurrence-based loop analysis yields a fully automated and precise way to derive sensitivity information over unknown symbolic parameters in probabilistic loops. As such, our paper brings the following main contributions:
\begin{itemize}
    \item We propose a fully automated approach for the sensitivity analysis of probabilistic loops based on \emph{moment recurrences} (Section~\ref{section:sensitvity:admissible}).
    \item We introduce \emph{sensitivity recurrences} and an algorithm for sensitivity analysis going beyond \emph{moment recurrences} (Section~\ref{section:sensitvity:non-admissible}, Algorithm~\ref{alg:sensitivity-recurrences}).
    \item We provide a precise characterization of the class of probabilistic loops for which \emph{sensitivity recurrences} are provably  sound and complete (Theorem~\ref{thm:sensitivity-computable}).
    \item We describe an experimental evaluation demonstrating the feasibility of our techniques on many interesting probabilistic programs  (Section~\ref{sec:evaluation}).
\end{itemize}

\section{Preliminaries \label{section:preliminaries}}
We write $\N$ for the natural numbers, $\R$ for the reals, $\overline{\Q}$ for the algebraic numbers, and $\mathbb{K}[x_1,\ldots, x_k]$ for the polynomial ring with coefficients in the field $\mathbb{K}$.
A polynomial consisting of a single monic term is a \emph{monomial}.
The expected value operator is denoted as $\E$.

\subsection{Syntax and Semantics of Probabilistic Loops}

\paragraph{Syntax.}
In this paper, we focus on unbounded probabilistic while-loops, as illustrated by the two examples of Figure~\ref{fig:running-examples} and introduced in \cite{polar}.
Our programming model considers non-nested while-loops preceded by a variable initialization part, with the loop body being a sequence of (nested) if-statements and variable assignments.
Unbounded probabilistic loops occur frequently when modeling dynamical systems. Guarded loops \texttt{while G: body} can be analyzed by considering the limiting behavior of unbounded loops of the form \texttt{while true: if G: body}.

The right-hand side of every variable assignment is either a probability distribution with existing moments (e.g. Normal or Uniform) and constant parameters, or a probabilistic choice of polynomials in program variables, that is $\texttt{x} = \mathit{poly}_1 \{p_1\} \dots \mathit{poly}_k \{p_k\}$, where \texttt{x} is assigned to $\mathit{poly}_i$ with probability $p_i$.
Further, programs can be parameterized by symbolic constants which represent arbitrary real numbers.
For further details, we refer to Appendix~\ref{appendix:syntax}.

Throughout this paper, we refer to programs from our programming  model simply by (probabilistic) loops or (probabilistic) programs.
For a program \prog{} we denote the set of program variables by \vars{} and the set of symbolic parameters by \params{}.

Dependencies between program variables is a syntactical notion introduced next, representing a central part in our work. 

\begin{definition}[Variable Dependency] \label{definition:variable-dependency}
Let \prog{} be a probabilistic loop and $x, y \in \vars{}$.
We say that $x$ \emph{depends directly on} $y$, and write  $x \ddepends{}{} y$, if $y$ appears in an assignment of $x$ or an assignment of $x$ occurs in an if-statement where $y$ appears in the if-condition.
Furthermore, we say that the dependency is \emph{non-linear}, denoted as $x \ddepends{N}{} y$, if $y$ appears non-linearly in an assignment of $x$.
\end{definition}

By $\depends{}{}$ we denote the transitive closure of $\ddepends{}{}$.
Regarding non-linearity, we write $x \depends{N}{} y$, if at least one of the direct dependencies from $x$ to $y$ is non-linear.

\begin{example}
In Figure~\ref{fig:running:2}, we have (among others) $y \ddepends{}{} z$, $w \ddepends{N}{} x$, $u \depends{N}{} w$, and $w \depends{N}{} u$.
To illustrate the influence of if-conditions, in Figure~\ref{fig:running:1}, note that $\textit{efficiency} \ddepends{}{} \textit{vax}$ and $\textit{infected\_prob} \depends{}{} \textit{vax}$.
\end{example}

\paragraph{Semantics.}
Operationally, every probabilistic loop models an infinite-state Markov chain,  which in turn induces a canonical probability space.
Due to brevity, we omit the straightforward but rather technical construction of the Markov chains associated to probabilistic loops.
For more details, we refer the interested reader to \cite{polar,Durrett2019}.
For an arithmetic expression $\textit{Expr}$ in program variables, we denote by $\textit{Expr}_n$ the stochastic process evaluating \textit{Expr} after the $n$th loop iteration.
 
\subsection{C-finite Recurrences}\label{section:preliminaries:c-finite}

We  recall  notions from algebraic recurrences~\cite{everest2003recurrence,kauers2011concrete}, adjusted to our work. 

A \emph{sequence} of algebraic numbers is a function $u \colon\N\to\overline\Q$, succinctly denoted by $\seq[\infty]{u(n)}{n=0}$ or $\seq{u(n)}{n}$.
A \emph{recurrence} for the sequence $u$ of order $\ell\in\N$ is specified by a function $\textrm{f} : \R^{\ell+1} \to \R$ and given by the equation $u(n{+}\ell) = \textrm{f}(u(n{+}\ell{-}1), \ldots, u(n{+}1), u(n),n)$.
The \emph{solutions} of a recurrence are the sequences satisfying the recurrence equation.
Of particular relevance to our work is the class of \emph{linear recurrences with constant coefficients} or more shortly, \emph{C-finite recurrences}.
The sequence $u$ satisfies a C-finite recurrence if $u(n{+}\ell) = c_{\ell{-}1} u(n{+}\ell{-}1) + c_{\ell{-}2} u(n{+}\ell{-}2) + \cdots + c_{0} u(n)$ holds,
where  $c_0,\ldots, c_{\ell-1}\in \overline\Q$ are constants and $c_0 \neq 0$.
Every C-finite recurrence is associated with its \emph{characteristic polynomial}
$x^{n} - c_{\ell-1} x^{\ell-1} - \cdots - c_{1} x - c_{0}$.
The solutions of C-finite recurrences can always be computed~\cite{kauers2011concrete} and written in closed-form as \emph{exponential polynomials}.
More precisely, if $\seq{u(n)}{n}$ is the solution to a C-finite recurrence, then $u(n) = \sum_{k=1}^r P_k(n) \lambda_k^n$ where $P_k(n)\in\overline{\Q}[n]$ and $\lambda_1,\ldots, \lambda_r$ are  the roots of the characteristic polynomial.
The properties of C-finite recurrences also hold for systems of C-finite recurrences (systems of linear recurrence equations with constant coefficients, specifying multiple sequences).

\subsection{Higher Moment Analysis using Recurrences} \label{section:preliminaries:moment-recurrences}

For a random variable $x$, its higher moments are defined as $\E(x^k)$ for $k \in \N$.
More generally, mixed moments for a set of random variables $S$ are expected values of monomials in $S$.
Recent works in probabilistic program analysis \cite{polar,BartocciKS20} introduced techniques and tools based on C-finite recurrences to compute higher moments of program variables for probabilistic loops.
For example, for a probabilistic loop, $k \in \N$ and a program variable $x$, a closed-form solution for the $k$th higher moment of $x$ parameterized by the loop iteration $n$, that is $\E(x^k_n)$, is computed in~\cite{polar} using the \Polar{} tool. 
This is achieved by first normalizing the program to eliminate if-statements and ensure every variable is only assigned once in the loop body.
Then, a system of C-finite recurrences is constructed that models expected values of monomials in program variables.
More precisely, for a monomial $M$ in program variables, the work of~\cite{polar} constructs a linear recurrence equation, relating the expected value of $M$ in iteration $n{+}1$ to the expected values of program variable monomials in iteration $n$.
The linear recurrence for the expected value of $M$ in iteration $n{+}1$ is constructed by starting with the expression $\E(M_{n{+}1})$ and replacing variables contained in the expression by their assignments bottom-up as they appear in the loop body.
Throughout, the linearity of expectation is used to convert expected values of polynomials into expected values of monomials (cf. Appendix~\ref{appendix:moment-recurrences}).

We adopt the setting of~\cite{polar,BartocciKS20} and refer by \emph{moment recurrences} to the recurrence equations these techniques construct for moments of program variables.

\begin{definition}[Moment Recurrence]\label{def:moment-rec}
Let \prog{} be a probabilistic loop and $M$ a monomial in \vars{}.
A \emph{moment recurrence} for $M$ is an equation $\E(M_{n+1}) = \sum_{i=1}^r c_i \cdot \E(W^{(i)}_n)$
where $c_i \in \overline\Q$ and all $W^{(i)}$ are monomials in \vars{}.
\end{definition}

In order to compute a closed-form solution for $\E(x^k_n)$, we employ~\cite{polar} to first compute a moment recurrence $R$ for the monomial $x^k$. Next, we derive moment recurrences for all monomials $W^{(i)}$ in $R$ (cf. Definition~\ref{def:moment-rec}) to construct a system of C-finite recurrences.

\begin{example}\label{example:recurrences:running:1}
Consider the program from Figure~\ref{fig:running:1}.
For a more succinct representation, we abbreviate the symbolic parameters as
$cp := \textit{contact\_param}$; $vp := \textit{vax\_param}$ and $d := \textit{decline}$.
The first moments of the program variables are modeled through the following system of C-finite recurrences~\cite{polar}:
\begin{flalign*}
    \E(\textit{infected_prob}_{n+1}) &= cp - cp \cdot \E(\textit{efficiency}_n) \\
    \E(\textit{efficiency}_{n+1}) &= (d - d \cdot vp) \cdot \E(\textit{efficiency}_n) + \frac{3}{4} \cdot vp
\end{flalign*}
The initial values of $\E(\textit{infected_prob}_{n})$ and $\E(\textit{efficiency}_{n})$ are both $0$.
The system can be automatically solved \cite{kauers2011concrete} to obtain closed-form solutions, which are, when expanded, exponential polynomials, e.g. for $\E(\textit{infected_prob}_{n})$:
\begin{align*}
    \E(\textit{infected_prob}_{n}) &= cp + \frac{3 \cdot vp \cdot cp \cdot \left(\left(d - d\cdot vp\right)^{n-1} - 1\right)}{4 \left(d \cdot vp - d + 1\right)} %
\end{align*}
\end{example}

We note that moment recurrences do not always exist.
Moreover, termination is not guaranteed when recursively inferring the moment recurrences for all monomials $W^{(i)}$ in Definition~\ref{def:moment-rec} in order to construct a C-finite system.

\begin{example} \label{example:running-example-non-moment-computable}
To illustrate that the approach based on moment recurrences does not work unconditionally, consider the loop from Figure~\ref{fig:running:2} and construct the moment recurrence $\E(w_{n+1}) = 5 \cdot \E(w_n) + \E(x_n^2)$.
Since the recurrence contains $\E(x_n^2)$, we require the moment recurrence $\E(x_{n+1}^2) = \E((5+w_{n+1}+x_n)^2) = \E(w_{n+1}^2) + \ldots$ which requires the recurrence for $\E(w_n^2)$.
This in turn necessitates a recurrence for $\E(x_n^4)$, which necessitates the recurrence for $\E(w_n^4)$ and so on.
This process will repeatedly require recurrences for increasing moments of $x_n$ and $w_n$, implying that this process will not terminate.
\end{example}

To circumvent variable dependencies and compute closed-forms of moment recurrences, we note that the following two conditions on the probabilistic loops ensure  existence and computability of higher order moments.

\begin{definition}[Admissible Loop]\label{def:admissible-loops}
A loop is \emph{admissible} if
\begin{enumerate}
    \item all variables in branching conditions only assume values in a finite set (i.e. they are \emph{finite valued}), and
    \item\label{moment-comp:self} no variable $x$ is non-linearly self-dependent ($x \not \depends{N}{} x$)
    \footnote{
    While~\cite{polar} allows arbitrary dependencies among finite valued variables, our work  omits this generalization for simplicity. 
    Nevertheless,  our results also apply to admissible loops with  arbitrary dependencies among finite valued variables.
    }.
\end{enumerate}
\end{definition}

\begin{example}
The probabilistic loop in Figure~\ref{fig:running:1} is admissible.
However, the program in Figure~\ref{fig:running:2} is not admissible.
It does not satisfy condition~\ref{moment-comp:self}: the variable $x$ depends linearly on $w$ and $w$ depends quadratically on $x$; therefore, $x$ is non-linearly self-dependent.
\end{example}

Admissible probabilistic loops are \emph{moment-computable}~\cite{polar}, that is,   higher moments of  program variables admit computable closed-forms as exponential polynomials.
The restriction on finite valued variables in branching conditions is necessary to guarantee computability and completeness: a single branching statement involving an unbounded variable renders the program model Turing-complete~\cite{polar}.

\section{Sensitivity Analysis}\label{section:sensitivity}

In this section, we study the sensitivity of program variable moments with respect to symbolic parameters. 
We present two exact and fully automatic methods to answer the question of how small changes in symbolic parameters influence the moments of program variables. As such, we exploit the fact that closed-forms for variable moments in admissible loops are computable (Section~\ref{section:sensitvity:admissible}). We further go beyond the admissible loop setting (Section~\ref{section:sensitvity:non-admissible}) and devise a sensitivity analysis technique applicable to some non-admissible loops, such as the program in Figure~\ref{fig:running:2}.

\begin{definition}[Sensitivity]\label{def:sensitivity}
Let \prog\ be a probabilistic loop, $x \in \vars$ and $p \in \params$.
The \emph{sensitivity} of the $k$th moment of $x$ with respect to $p$, denoted as  $\diffp \E(x_n^k)$, is defined as the partial derivative of $\E(x_n^k)$ with respect to $p$, and parameterized by loop counter $n$.
For monomials $M$ of variables, the sensitivity $\diffp \E(M_n)$ is defined analogously.
\end{definition}

Similar to \emph{moment computability}~\cite{polar}, we define a program to be \emph{sensitivity computable} if the sensitivities of all the variables' expected values are expressible in closed-form.

\begin{definition}[Sensitivity Computability]\label{def:sensitivity-computable}
Let \prog{} be a probabilistic program and $p \in \params$.
\prog{} is \emph{sensitivity computable with respect to $p$}, if for every variable $x \in \vars$ the sensitivity $\diffp \E(x_n)$ has an exponential polynomial closed-form that is computable.
\end{definition}

\subsection{Sensitivity Analysis for Admissible Loops}\label{section:sensitvity:admissible}
As mentioned in Section~\ref{section:preliminaries}, for admissible loops, any moment of every program variable admits a closed-form solution as an exponential polynomial which is computable.
That is,  for a program variable $x$ and $k \in \N$, the $k$th moment of $x$ can be written as
$\E(x_n^k) = \sum_{j=0}^r P_j(n) \lambda_j^n$,
where $P_j \in \overline{\Q}[n]$ and $\lambda_j \in \overline{\Q}$ may contain symbolic parameters.
We next show that based on the closed-forms of variable moments, we can compute exponential polynomials representing the sensitivities of moments on parameters.

\begin{theorem}[Admissible Sensitivities]\label{thm:admissible-diff}
Let \prog\ be an admissible program, $x \in \vars$, $p \in \params$, and $k \in \N$.
Then, the sensitivity $\diffp \E(x_n^k)$ has an exponential polynomial closed-form that is computable.
\end{theorem}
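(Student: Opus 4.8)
The plan is to leverage the fact that, for an admissible loop, the $k$th moment $\E(x_n^k)$ admits a closed-form exponential polynomial $\E(x_n^k) = \sum_{j=1}^{r} P_j(n) \lambda_j^n$, where the polynomials $P_j \in \overline{\Q}[n]$ and the bases $\lambda_j \in \overline{\Q}$ may involve the symbolic parameters in $\params$, and then to argue that differentiating this expression with respect to $p$ again yields an exponential polynomial whose coefficients and bases are computable. The subtlety is that $p$ occurs \emph{both} in the polynomial coefficients $P_j$ and, more delicately, in the exponential bases $\lambda_j$; differentiating $\lambda_j^n$ with respect to $p$ produces a factor $n \lambda_j^{-1} \lambda_j^n \cdot \partial_p \lambda_j$, which is still of exponential-polynomial shape provided $\lambda_j \neq 0$ (guaranteed since the characteristic polynomial of a C-finite recurrence has nonzero constant term, $c_0 \neq 0$), and provided $\partial_p \lambda_j$ is itself an algebraic function of the parameters that we can handle.

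First I would invoke the moment-computability of admissible loops (Section~\ref{section:preliminaries:moment-recurrences}) to obtain the system of C-finite recurrences for the relevant monomials and its closed-form solution $\E(x_n^k) = \sum_{j=1}^{r} P_j(n) \lambda_j^n$. Second, I would apply the partial derivative operator $\diffp$ termwise, using linearity and the product rule: $\diffp\big(P_j(n)\lambda_j^n\big) = \big(\diffp P_j(n)\big)\lambda_j^n + P_j(n)\cdot n\,\lambda_j^{n-1}\,\diffp\lambda_j$. Since $\diffp P_j(n)$ is again a polynomial in $n$ over the field of rational functions (or algebraic functions) in the parameters, and $n\lambda_j^{n-1}\diffp\lambda_j = (\diffp\lambda_j)\lambda_j^{-1}\cdot n\lambda_j^n$ reintroduces the same base $\lambda_j$ multiplied by a polynomial factor in $n$, the whole expression collapses to $\sum_{j=1}^{r} \widetilde{P}_j(n)\lambda_j^n$ for new polynomials $\widetilde{P}_j$, i.e.\ again an exponential polynomial. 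Third, I would address computability: the $\lambda_j$ are roots of the characteristic polynomial, which is computable from the recurrence system; the $P_j$ are determined by solving a linear system over the splitting field; and $\diffp \lambda_j$, although $\lambda_j$ may only be an implicit algebraic function of $p$, is computable by implicit differentiation of the characteristic equation $\chi(\lambda_j, p) = 0$, giving $\diffp\lambda_j = -(\partial_\lambda \chi)^{-1}\partial_p\chi$ evaluated at $\lambda_j$. Thus every ingredient of $\diffp\E(x_n^k)$ is effectively computable.

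The main obstacle I anticipate is the bookkeeping around the exponential bases $\lambda_j$ being algebraic \emph{functions} of the parameters rather than fixed algebraic numbers: one must be careful that the closed-form from Section~\ref{section:preliminaries:c-finite} still makes sense when coefficients lie in a field of algebraic functions of $\params$ (it does, since the theory of C-finite recurrences works over any field of characteristic zero), and that differentiating preserves membership in the right class. A secondary technical point is handling repeated roots of the characteristic polynomial, where $\lambda_j$ may fail to be differentiable at special parameter values; generically, however, the multiplicities are constant and implicit differentiation goes through, so the closed-form is valid as an identity of functions on a dense open set of parameter values, which suffices for the statement. I would conclude by noting that the case of a general monomial $M$ is identical, since $\E(M_n)$ is likewise an exponential polynomial in the admissible setting, so the same termwise differentiation argument applies verbatim.
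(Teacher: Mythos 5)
Your proposal is correct and follows essentially the same route as the paper's proof: both start from the exponential-polynomial closed-form guaranteed by admissibility, differentiate termwise with the product rule to obtain $\sum_j \bigl(\diffp P_j(n) + P_j(n)\, n\, (\diffp \lambda_j)\, \lambda_j^{-1}\bigr)\lambda_j^n$, and conclude computability. Your treatment of computability (implicit differentiation of the characteristic equation, the $\lambda_j \neq 0$ observation, and the repeated-root caveat) is more explicit than the paper's brief appeal to the computability of derivatives of elementary functions, but the underlying argument is the same.
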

\begin{proof}
Because \prog{} is admissible, $\E(x^k_n)$ can be expressed as an exponential polynomial.
We show that the sensitivity can be expressed as an exponential polynomial by expanding $\E(x_n^k)$ into a sum of exponential monomials:
$\E(x_n^k) = \sum_{j=0}^r P_j(n) \lambda_j^n = \sum_{j=0}^r \sum_{i=0}^{m_j} M_{ij}(n) \lambda_j^n,$
where $m_j$ is the number of monomials in $P_j$ and every $M_{ij}$ is a monomial.
Note that every $M_{ij}$ and $\lambda_j$ may depend on the symbolic constant $p$.
The derivative of the exponential monomials can then be obtained by applying the product rule for derivatives:
\begin{align*}
\diffp \E(x_n^k) &= \sum_{j=0}^r \sum_{i=0}^{m_j} (\diffp M_{ij}(n)) \lambda_j^n + M_{ij}(n) \cdot n \cdot (\diffp \lambda_j) \cdot \lambda_j^{n-1} \\
&= \sum_{j=0}^{r} (\diffp P_j(n) + P_j(n) \cdot n \cdot \diffp \lambda_j \cdot \frac{1}{\lambda_j}) \lambda_j^{n}
\end{align*}
It is left to show that the exponential polynomial $\diffp \E(x^k_n)$ is computable.
Because \prog{} is admissible, an exponential polynomial for $\E(x^k_n)$ is computable.
Now, the second claim follows from the fact that exponential polynomials are elementary and that the derivative of any elementary function is computable.\qed
\end{proof}

As a corollary, admissible loops are sensitivity computable.
Although \emph{sensitivity computability} only refers to first moments, Theorem~\ref{thm:admissible-diff} shows that for admissible loops, sensitivities of \emph{all} higher moments of program variables admit a computable closed-form.

\begin{example}\label{example:sensitivities:running:1}
Consider Figure~\ref{fig:running:1}.
In Example~\ref{example:recurrences:running:1} we stated the closed-form solutions of $\E(\textit{infected_prob}_{n})$.
The sensitivities of the respective expected values can be computed by symbolic differentiation and, by Theorem~\ref{thm:admissible-diff}, can be expanded to exponential polynomials. 
For example, the following expression describes the sensitivity of $\E(\textit{infected_prob}_{n})$ with respect to the parameter $vp$:

\noindent
\resizebox{\linewidth}{!}{
\begin{minipage}{\linewidth}
\begin{flalign*}
    \partial_{vp} \E(\textit{infected_prob}_{n}) = & \frac{3 \cdot cp  \left(1- vp \cdot n + d \left(1+vp\right) \left(n \cdot vp - vp -1\right) \right)  \left(d \left(1-vp\right)\right)^n}{4 \left(vp-1\right)^2 d \left(1 + d \cdot vp - d\right)^2} \\
    & \quad + \frac{3 \cdot cp \cdot \left(d-1\right)}{4 \left(1+d \cdot vp - d \right)^2}%
\end{flalign*}
\end{minipage}}
\end{example}

\subsection{Sensitivity Analysis for Non-Admissible Loops}\label{section:sensitvity:non-admissible}

In general, moments of program variables of non-admissible loops do not satisfy linear recurrences.
Therefore, we cannot utilize closed-forms of the moments for sensitivity analysis.
Nevertheless, sensitivity analysis is feasible even for some non-admissible loops.
In this section, we propose a novel sensitivity analysis approach applicable to non-admissible loops.
Moreover, we characterize the class of (non-admissible) loops for which our method is sound and complete.

For admissible loops, linear recurrences describing variable moments can be used as an intermediary step to compute sensitivities.
The core of our approach towards handling  non-admissible loops is to shortcut moment recurrences and devise recurrences directly for sensitivities.
Due to independence with respect to the sensitivity parameter, sensitivities of program variables can follow a linear recurrence even though their moments do not.
We illustrate the idea of our new method on the non-admissible loop from Figure~\ref{fig:running:2}.

\begin{example}\label{ex:sensitivity-rec}
Consider the non-admissible program from Figure~\ref{fig:running:2}.
The moment recurrences for all program variables are:
\begin{alignat*}{2}
    \E(z_{n+1}) &= \E(z_n) + 0.5 \cdot (p + p^2) & \E(y_{n+1}) &= \E(y_n) - 5p \cdot \E(z_{n+1})\\
    \E(w_{n+1}) &= 5 \cdot \E(w_n) + \E(x_n^2) & \E(x_{n+1}) &= 5 + \E(w_{n+1}) + \E(x_n)\\ 
    \E(u_{n+1}) &= \E(x_{n+1}) + p \cdot \E(zy_{n+1}) & &
\end{alignat*}
As illustrated in Example~\ref{example:running-example-non-moment-computable}, we cannot complete the recurrences to a C-finite system because both $w$ and $x$ are non-linearly self-dependent.
Therefore, we cannot compute closed-form solutions for $\E(w_n)$ and $\E(x_n)$.
However, we can shortcut solving for $\E(w_n)$ and $\E(x_n)$ by differentiating the moment recurrences with respect to $p$ and establish recurrences directly for the sensitivities:
\begin{align*}
    \diffp \E(z_{n+1}) &= \diffp \E(z_n) + 0.5 \cdot (1 + 2p)\\
    \diffp \E(y_{n+1}) &= \diffp \E(y_n) - 5p \cdot \diffp \E(z_{n+1}) - 5 \cdot \E(z_{n+1})\\
    \diffp \E(w_{n+1}) &= 5 \cdot \diffp \E(w_n) + \diffp \E(x_n^2)\\
    \diffp \E(x_{n+1}) &= \diffp \E(w_{n+1}) + \diffp \E(x_n)\\ 
    \diffp \E(u_{n+1}) &= \diffp \E(x_{n+1}) + \E(zy_{n+1}) + p \cdot \diffp \E(zy_{n+1})
\end{align*}
Now, because the variables $w$ and $x$ do not depend on the parameter $p$, we conclude that $\diffp \E(w_n) \equiv \diffp \E(x_n) \equiv 0$.
The sensitivity recurrences thus simplify:
\begin{align*}
    \diffp \E(z_{n+1}) &= \diffp \E(z_n) + \frac{1 + 2p}{2}\\
    \diffp \E(y_{n+1}) &= \diffp \E(y_n) - 5p \cdot \diffp \E(z_{n+1}) - 5 \cdot \E(z_{n+1})\\
    \diffp \E(u_{n+1}) &= \E(zy_{n+1}) + p \cdot \diffp \E(zy_{n+1})
\end{align*}
We can interpret sensitivities such as $ \diffp \E(z_{n})$ or $\diffp \E(u_{n})$ as atomic recurrence variables.
In the resulting recurrences, all variables with non-linear self-dependencies vanished.
Therefore, the recurrences can be completed to a C-finite system and solved by existing techniques, even though $\E(w_n)$ and $\E(x_n)$ are not C-finite.
The resulting system of recurrences consists of all recurrences for sensitivities and moments that appear on the right-hand side of another recurrence.
That is, the system of recurrences consists of the sensitivity recurrences for $\diffp \E(z)$, $\diffp \E(y)$, $\diffp \E(u)$, $\diffp \E(yz)$, $\diffp \E(z^2)$ and the moment recurrences for $\E(z)$, $\E(y)$, $\E(yz)$, $\E(z^2)$.
\end{example}

Motivated by Example~\ref{ex:sensitivity-rec}, we introduce the notion of \emph{sensitivity recurrences}.

\begin{definition}[Sensitivity Recurrence]\label{def:sensitivity-recurrences}
Let \prog\ be a program, $p \in \params$ a symbolic parameter, $M$ a monomial in $\vars$ and let $\E(M_{n+1}) = \sum_{i=1}^r c_i \cdot \E (W^{(i)}_n)$ be the moment-recurrence of $M$.
Then the \emph{sensitivity recurrence} of $M$ with respect to $p$ is defined as
\begin{flalign}\label{eq:sensitivity-recurrence}
\begin{split}
\boxed{\diffp \E(M_{n+1})} & \coloneqq \frac{\partial \E \left(M_{n+1}\right)}{\partial p} 
=
\frac{\partial}{\partial p} \left( \sum_{i=1}^r c_i \cdot \E \left(W^{(i)}_n\right) \right) \\
& \boxed{ = \sum_{i=1}^r  \left(\frac{\partial}{\partial p} c_i \right) \cdot \E \left(W^{(i)}_n\right) + c_i \cdot \diffp \E \left(W^{(i)}_n\right)}
\end{split}
\end{flalign}
\end{definition}

\begin{algorithm}
\caption{Computing Sensitivities via Sensitivity Recurrences}\label{alg:sensitivity-recurrences}
\begin{algorithmic}[1]
\Require program \prog{}, monomial $M$ in \vars{}, $p \in \params$
\Ensure closed-form for $\diffp \E(M_n)$
\If{$M$ is $p$-independent}
\State \Return $0$\label{algo:step:return-independent}
\EndIf
\State $\Eqs \leftarrow \emptyset$; $\mom \leftarrow \emptyset$; $\sens \leftarrow \{ M \}$

\color{algo1}
\While{$\sens \neq \emptyset$} \Comment{Add all necessary sensitivity recurrences}
    \State pick $W \in \sens$; $\sens \leftarrow \sens \setminus \{W\}$\label{algo:step:pick1}
    \State $\SRec \leftarrow $ sensitivity recurrence of $W$
    \State Replace every $\diffp \E(W'_n)$ in $\SRec$ by $0$ if $W'$ is $p$-independent\label{algo:step:var-independence}
    \State Replace every $(\nicefrac{\partial}{\partial p} c) \E(W'_n)$ in $\SRec$ by $0$ if $(\nicefrac{\partial}{\partial p} c) = 0$\label{algo:step:const-independence}
    \State $\Eqs \leftarrow \Eqs \cup \{ \SRec \}$

    \State Add to $\sens$ all monomials $W'$ s.t. $\diffp \E(W'_n)$ in $\SRec$\label{algo:step:add-sens}
    \State $\hookrightarrow$ and the sensitivity recurrence of $W'$ $\not\in \Eqs$

    \State Add to $\mom$ all monomials $W'$ s.t. $\E(W'_n)$ in $\SRec$\label{algo:step:add-mom1}
    
\EndWhile
\color{algo2}
\While{$\mom \neq \emptyset$} \Comment{Add all necessary moment recurrences}
    \State pick $W \in \mom$; $\mom \leftarrow \mom \setminus \{W\}$\label{algo:step:pick2}
    \State $\MRec \leftarrow $ moment recurrence of $W$
    \State $\Eqs \leftarrow \Eqs \cup \{ \MRec \}$
    
    \State Add to $\mom$ all monomials $W'$ s.t. $\E(W'_n)$ in $\MRec$\label{algo:step:add-mom2}
    \State $\hookrightarrow$ and the moment recurrence of $W' \not\in \Eqs$
    
\EndWhile
\color{black}
\State $S \leftarrow$ solve system of C-finite recurrences $\Eqs$
\State \Return closed-form of $\diffp \E(M_n)$ from $S$
\end{algorithmic}
\end{algorithm}

The sensitivity recurrence of $M$ equates the sensitivity of $M$ at iteration $n{+}1$ to moments \emph{and sensitivities} at iteration $n$.
Along the ideas in Example~\ref{ex:sensitivity-rec}, we provide with Algorithm~\ref{alg:sensitivity-recurrences} a procedure for sensitivity analysis also applicable to non-admissible loops.
The idea of Algorithm~\ref{alg:sensitivity-recurrences} is to determine $\diffp \E(M_{n})$ by constructing a C-finite system consisting of all necessary recurrence equations for the moments and sensitivities of program variables.
As illustrated in Example~\ref{ex:sensitivity-rec}, we can exploit the independence of variables from the sensitivity parameter $p$ to simplify the problem:
if a monomial $W'$ is independent from $p$ then $\diffp \E (W'_n) \equiv 0$.
Moreover, if $p$ does not appear in the constant $c_i$ of Equation~\eqref{eq:sensitivity-recurrence}, then $(\nicefrac{\partial}{\partial p}) c_i = 0$, and hence the moment recurrence of $W'$ does not need to be constructed (lines~\ref{algo:step:var-independence}--\ref{algo:step:const-independence} of Algorithm~\ref{alg:sensitivity-recurrences}).
This is essential if the \emph{expected value} of $W'$ does not admit a closed-form.
Algorithm~\ref{alg:sensitivity-recurrences} is sound by construction, however, termination is non-trivial.
In the remainder of this section, we formalize the notion of parameter (in)dependence and give a characterization of the class of non-admissible loops for which Algorithm~\ref{alg:sensitivity-recurrences} terminates. As a consequence of Algorithm~\ref{alg:sensitivity-recurrences}, we show that  sensitivity recurrences yield an exact and complete technique for sensitivity analysis (Theorem~\ref{thm:sensitivity-computable}).

\begin{definition}[$p$-Dependent Variable]\label{def:p-dependency}
Let \prog{} be a program with parameter $p \in \params$. A variable $x \in \vars$ is \emph{$p$-dependent}, if (1) $p$ appears in an assignment of $x$, (2) $x$ depends on some $y \in \vars$ ($x \depends{}{} y$) and $y$ is $p$-dependent or (3) an assignment of $x$ occurs in an if-statement where $p$ appears in the if-condition.
A variable is \emph{$p$-independent} if it is not $p$-dependent.
A monomial $M$ in program variables is $p$-dependent if $M$ contains at least one $p$-dependent variable, otherwise it is $p$-independent.
\end{definition}

For any $p$-independent monomial $M$ in program variables,  the corresponding sensitivity $\diffp \E(M_n)$ is zero (by using induction on $n$ and applying Definition~\ref{def:p-dependency}).

\begin{lemma} \label{lemma:p-indep-sensitivity-is-zero}
Let \prog\ be a program, $p \in \params$ a symbolic parameter and $M$ a $p$-independent monomial in $\vars$, then it holds that the sensitivity variable of $M$ is zero, i.e., $\forall n \geq 0: \diffp \E(M_n) = 0$.
\end{lemma}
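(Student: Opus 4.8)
The plan is to prove, by induction on $n$, the slightly stronger statement that for every $p$-independent monomial $M$ the value $\E(M_n)$ does not depend on $p$ at all; differentiating then yields $\diffp \E(M_n) = 0$. (Equivalently, one may induct directly on the sensitivity variables using the sensitivity recurrence of Definition~\ref{def:sensitivity-recurrences}.) The single structural ingredient needed is that $p$-independence \emph{propagates} through the construction of moment recurrences: if $M$ is $p$-independent and $\E(M_{n+1}) = \sum_{i=1}^{r} c_i \cdot \E(W^{(i)}_n)$ is its moment recurrence (Definition~\ref{def:moment-rec}), then no coefficient $c_i$ contains $p$ and every monomial $W^{(i)}$ is again $p$-independent.

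Granting this ingredient, the induction is immediate. For the base case $n=0$, every variable in $M$ is $p$-independent, so by Definition~\ref{def:p-dependency} the parameter $p$ appears neither in the initialization of such a variable nor (clause~(3)) in any branching condition guarding it, and all variables feeding its initialization are $p$-independent (clause~(2)); hence $\E(M_0)$ is a $p$-free constant. For the step, assume $\E(W_n)$ is $p$-free for every $p$-independent monomial $W$; given a $p$-independent $M$ with moment recurrence $\E(M_{n+1}) = \sum_{i=1}^{r} c_i \cdot \E(W^{(i)}_n)$, the ingredient makes every $c_i$ and (by the induction hypothesis) every $\E(W^{(i)}_n)$ free of $p$, so $\E(M_{n+1})$ is $p$-free as well. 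Differentiating with respect to $p$ gives $\diffp \E(M_n) = 0$ for all $n$, which is precisely the simplification exploited for $w$ and $x$ in Example~\ref{ex:sensitivity-rec}.

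The main obstacle is thus establishing the propagation ingredient, which I would do by tracking the bottom-up substitution that builds moment recurrences (Section~\ref{section:preliminaries:moment-recurrences}). Starting from $\E(M_{n+1})$ with $M$ ranging over $p$-independent variables, replacing a variable $x$ by its (normalized, single-assignment) right-hand side can introduce only: constants and distribution moments not involving $p$ (since $p$ does not occur in an assignment of the $p$-independent $x$, Definition~\ref{def:p-dependency}(1)); indicator or branching-probability terms coming from if-elimination, which are $p$-free because every variable occurring in a guard that controls $x$ is itself $p$-independent (Definition~\ref{def:p-dependency}(3)); and further program variables on which $x$ depends, all of which are $p$-independent (Definition~\ref{def:p-dependency}(2)). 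The substitution terminates because it follows the single-assignment order of the normalized loop body, leaving a polynomial over $p$-independent variables with $p$-free coefficients; linearity of expectation then yields $p$-free $c_i$ and $p$-independent monomials $W^{(i)}$. The one delicate point is to match the concrete way the normalization step encodes guards with clause~(3) of Definition~\ref{def:p-dependency}; I expect this to be routine once normalization is spelled out. As an independent sanity check, the claim also follows at the semantic level: the law of the process $M_n$ is unaffected by the value of $p$, so $\E(M_n)$ is constant in $p$.
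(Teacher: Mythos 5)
Your proposal is correct and follows essentially the same route as the paper's own (omitted) proof: an induction on $n$ through the moment/sensitivity recurrence, resting on the fact that $p$-independence of $M$ forces the coefficients $c_i$ to be $p$-free and the monomials $W^{(i)}$ to be $p$-independent. The only difference is cosmetic --- you carry the slightly stronger invariant that $\E(M_n)$ is constant in $p$ and you spell out the propagation of $p$-independence through the bottom-up construction of moment recurrences, which the paper's proof simply asserts.
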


In Example~\ref{ex:sensitivity-rec}, the moments $\E(w_n)$ and $\E(x_n)$ do not admit closed-forms.
We resolved this issue by differentiating all moment recurrences and working directly with the sensitivity recurrences, where the moment recurrences for $w$ and $x$ vanished.
Crucial for this phenomenon is the fact that the variables $w$ and $x$ are independent of the sensitivity parameter $p$.

However, a second fact is necessary to guarantee that the moment recurrences of $w$ and $x$ do not appear in the resulting system of recurrences:
Assume some new variable $v$ depends on $x$ and has the moment recurrence $\E(v_{n+1}) = \E(v_n) + p \cdot \E(x_n)$.
Then the sensitivity recurrence for $v$ is given by $\diffp \E(v_{n+1}) = \diffp \E(v_n) + \E(x_n) + p \cdot \diffp \E(x_n)$.
Even though $x$ itself is $p$-independent, $\E(x_n)$ remains in the sensitivity recurrence of $v$ because the coefficient of $\E(x_n)$ contains the parameter $p$.
A similar effect occurs if the moment recurrence for $v$ was $\E(v_{n+1}) = \E(v_n) + \E(z_n x_n)$, because $z$ is $p$-dependent.

Our goal is to characterize the class of probabilistic loops for which sensitivity recurrences yield a sound and complete method for sensitivity analysis.
Hence, we need to capture the notion that some dependencies between variables are free of multiplicative factors involving the sensitivity parameter.
We do this in the following definition by refining our dependency relation $\depends{}{}$.

\begin{definition}[$p$-Influenced Dependency]\label{definition:p-infl-dependency}
Let \prog{} be a program with parameter $p \in \params$ and $x,y \in \vars$ with $x \ddepends{}{} y$.
Then, the direct dependency between $x$ and $y$ is \emph{$p$-influenced}, written as $x \ddepends{}{p} y$, if at least one of the following conditions hold:

\begin{itemize}
    \item An assignment of $x$ contains $y$ and occurs in an if-statement with the if-condition involving $p$ or a $p$-dependent variable.
    \item An assignment of $x$ contains $y$ and is a probabilistic choice with some probability of the choice depending on $p$.
    \item An assignment of $x$ contains a term $c \cdot M \cdot y$ where $c$ is constant and $M$ is a monomial in program variables (possibly containing $y$). Moreover, either $c$ contains $p$ or $M$ contains a $p$-dependent variable.
\end{itemize}

If $x \depends{}{} y$, we write $x \depends{}{p} y$ if some dependency from $x$ to $y$ is $p$-influenced.
If $x \depends{}{} y$ and $x \not\depends{}{p} y$ we call the dependency between $x$ and $y$ \emph{$p$-free}.
\end{definition}

Definition~\ref{definition:p-infl-dependency} covers all cases in the construction of moment recurrences that introduce multiplicative factors depending on the sensitivity parameter $p$ \cite{polar}.
We provide details on the construction of moment recurrences in  Appendix~\ref{appendix:moment-recurrences}.

More concretely, assume $\prog{}$ to be a program and $x \in \vars{}$.
The moment recurrence of $x$ contains expected values of monomials $M$ of program variables.
Additionally, the moment recurrences of any $M$ will again contain expected values of monomials of program variables and so on.
We capture all of these monomials with the notion of \emph{descendant monomials} in Definition~\ref{def:descendant-monomial}.
Intuitively, to construct a system of moment recurrences for $\E(x_n)$ one needs to include the moment recurrences of all descendants of $x$.

\begin{definition}[Descendant Monomial]\label{def:descendant-monomial}
Let \prog{} be a program, $x \in \vars$, and $M$ a monomial in program variables.
The monomial $M$ is a \emph{descendant} of the variable $x$ if (1) $M = x$, or (2) $M$ occurs in the moment recurrence of a monomial $W$ and $W$ is a descendant of $x$.
The variable $x$ is an \emph{ancestor} of $M$.
\end{definition}

There is a dependency between $x$ and any variable of any descendant of $x$, which means $x \depends{}{} y$ for every descendant $M$ of $x$ and every variable $y$ in $M$.
Our dependency relation from Definition~\ref{definition:p-infl-dependency} allows us to pinpoint the variables in the moment recurrence of any descendant of $x$ (Definition~\ref{def:descendant-monomial}) with a multiplicative factor involving the sensitivity parameter.
Definitions~\ref{definition:p-infl-dependency} and \ref{def:descendant-monomial} together with the procedure constructing moment recurrences (Appendix~\ref{appendix:moment-recurrences}) yield:

\begin{lemma}[$p$-Influenced Moment Recurrence]\label{lemma:p-inf-moment-rec}
Let \prog\ be a program, $x \in \vars$, and $p \in \params$.
Assume $M$ is a monomial in program variables descending from $x$.
Let $W$ be a monomial in $M$'s moment recurrence with non-zero coefficient $c$.
If the parameter $p$ occurs in $c$, then for all variables $y$ in $W$ we have $x \depends{}{p} y$.
Moreover, if some variable $z$ in $W$ is $p$-dependent, then for all variables $y$ in $W$ different from $z$ we have $x \depends{}{p} y$.
\end{lemma}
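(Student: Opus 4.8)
The plan is to prove the statement by induction on the structure of descendant monomials, mirroring the construction of moment recurrences from Appendix~\ref{appendix:moment-recurrences}. Recall that $M$ descends from $x$ means $M$ arises by iteratively starting from $x$ and expanding via moment recurrences; so first I would set up an induction on the ``descendant depth'' of $M$, i.e.\ the number of moment-recurrence expansion steps needed to reach $M$ from $x$. For the base case, $M = x$ itself, and the moment recurrence of $x$ is obtained by taking $\E(x_{n+1})$ and substituting the assignment of $x$ (after program normalization) bottom-up. The claim then reduces to inspecting each syntactic way a variable $y$ can enter the right-hand side of $x$'s assignment with a coefficient mentioning $p$, or alongside a $p$-dependent variable $z$ in the same monomial --- and checking that each such case is exactly one of the three bullets of Definition~\ref{definition:p-infl-dependency}, hence witnesses $x \ddepends{}{p} y$, which gives $x \depends{}{p} y$.

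The inductive step is where the real work lies. Suppose $M$ descends from $x$ via a chain $x = M_0, M_1, \dots, M_t = M$, where each $M_{i+1}$ occurs in the moment recurrence of $M_i$. I would apply the induction hypothesis to $M_{t-1}$: for every variable $v$ in $M_{t-1}$ we have $x \depends{}{p} v$ whenever $v$'s coefficient in the step producing $M_{t-1}$ mentioned $p$ or $M_{t-1}$ contained a $p$-dependent variable besides $v$. Then, applying the base-case analysis one level down to the moment recurrence of $M_{t-1}$, any variable $y$ in $W$ (a monomial in $M$'s own recurrence) with a $p$-bearing coefficient, or co-occurring with a $p$-dependent $z$, satisfies $M_{t-1}$'s ancestor variables $\ddepends{}{p} y$ --- and composing a $p$-influenced dependency with any dependency still yields $x \depends{}{p} y$ by the "some dependency is $p$-influenced" clause at the end of Definition~\ref{definition:p-infl-dependency}. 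The two halves of the conclusion (one about $p$ appearing in the coefficient $c$, one about a $p$-dependent variable $z$ co-occurring) are handled in parallel, since the moment-recurrence construction treats the two situations symmetrically: substituting a distribution or polynomial for a program variable produces new monomials whose coefficients inherit $p$ exactly from $p$-bearing probabilistic choices or $p$-bearing polynomial coefficients, and whose variables inherit $p$-dependence exactly from $p$-dependent substituted variables.

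The main obstacle I anticipate is the bookkeeping in relating the \emph{syntactic} construction of moment recurrences (linearity of expectation, bottom-up substitution of assignments, expansion of products of polynomials into monomials) to the three clauses of Definition~\ref{definition:p-infl-dependency}, making sure the case analysis is genuinely exhaustive. In particular one must be careful with the third bullet: when an assignment contains a term $c \cdot M' \cdot y$ and this is expanded inside $\E((\dots)^k)$, products of several such terms appear, and one needs to argue that every resulting monomial's coefficient only picks up $p$ through a factor $c$ that already contained $p$, and only picks up $p$-dependent variables through an $M'$ that already contained one --- no ``spurious'' $p$ is introduced by the expansion itself. Since Definition~\ref{definition:p-infl-dependency} was, by design (as the paper notes), crafted to cover exactly these cases, the argument should go through, but writing it cleanly requires carefully fixing the normal form of programs and the precise inductive invariant. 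I would therefore state the invariant as: ``in the moment recurrence of any monomial $N$, a variable $y$ appears in a monomial $W$ with coefficient $c$ such that $p \mid c$ or $W$ contains a $p$-dependent $z \neq y$, only if some variable of $N$ is $\ddepends{}{p} y$,'' prove this single-step invariant directly from the construction, and then chain it along the descendant path together with the induction hypothesis.
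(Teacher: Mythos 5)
The paper itself never writes out a proof of Lemma~\ref{lemma:p-inf-moment-rec}; it is asserted to follow from Definitions~\ref{definition:p-infl-dependency} and~\ref{def:descendant-monomial} together with the construction of Appendix~\ref{appendix:moment-recurrences}. Your overall shape --- induction along the descendant chain, reduced to a single-step claim about one moment-recurrence construction, then chained via the transitivity clauses of $\depends{}{p}$ --- is therefore reasonable and presumably what the authors intend. However, the single-step invariant you propose is false as stated, and this is a genuine gap. You claim that if a variable $y$ occurs in a monomial $W$ of $N$'s moment recurrence with a $p$-bearing coefficient (or next to a $p$-dependent $z\neq y$), then some variable $v$ of $N$ satisfies $v \ddepends{}{p} y$. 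But one moment-recurrence step is not one direct-dependency edge: the construction substitutes \emph{all} assignments of the loop body bottom-up, so $y$ may be several $\ddepends{}{}$-edges away from every variable of $N$, and the edge carrying the $p$-influence need not be incident to $y$ at all. Concretely, take the loop body $a := a+1$; $b := b+1$; $u := p\cdot a$; $v := b$; $x := u\cdot v$. The moment recurrence of $x$ is $\E(x_{n+1})=p\,\E(a_nb_n)+p\,\E(a_n)+p\,\E(b_n)+p$, so $W=ab$ has coefficient $p$ and the lemma demands $x\depends{}{p}b$. Yet $b$ does not appear in $x$'s assignment (so there is no direct dependency from $x$ to $b$ of any kind), and the edge $v \ddepends{}{} b$ is not $p$-influenced (the assignment $v:=b$ mentions neither $p$ nor any $p$-dependent variable). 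The required witness is the edge $x \ddepends{}{p} v$, obtained from the third bullet of Definition~\ref{definition:p-infl-dependency} because $v$ is multiplied by the $p$-dependent monomial $u$ in $x$'s assignment; your invariant, which insists the $p$-influenced edge ends at $y$ itself, cannot produce it.

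The fix is to weaken the invariant to ``some variable $v$ of $N$ satisfies $v\depends{}{p}y$'' and to prove it by an inner induction over the bottom-up substitution that builds a \emph{single} moment recurrence, tracking for every intermediate term $c\cdot V$ that whenever $p$ occurs in $c$, or $V$ contains a $p$-dependent variable other than $y$, every already-substituted path from a variable of $N$ down to $y$ crosses a $p$-influenced edge. It is exactly in this inner induction that the ``$M$ contains a $p$-dependent variable'' clause of the third bullet does its work for sibling factors such as $b$ above. Your write-up invokes that bullet only in the direction ``no spurious $p$ is introduced by expansion''; the missing piece is the converse bookkeeping for variables whose own substitution chain never touches $p$. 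Relatedly, your base case inspects only ``the right-hand side of $x$'s assignment'', whereas the moment recurrence of $x$ already traverses the entire loop body; the base case is thus not simpler than the inductive step and needs the same inner induction.
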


We now state our main result (Theorem~\ref{thm:sensitivity-computable}) describing the class of probabilistic loops for which Algorithm~\ref{alg:sensitivity-recurrences} terminates and, hence, sensitivity recurrences are sound and complete.
We characterize the class of loops in terms of our dependency relations as well as variables with non-linear self-dependencies, which we refer to as \emph{defective} variables.

\begin{definition}[Defective Variables]
Let \prog\ be a program and $x \in \vars$, then $x$ is \emph{defective} if $x \depends{N}{} x$.
Otherwise, $x$ is \emph{effective}.
\end{definition}

\begin{restatable}[Non-Admissible Sensitivities]{theorem}{sensitivitycomputable}\label{thm:sensitivity-computable}
Let \prog{} be a probabilistic program, $p \in \params$, $x \in \vars$, and assume all the following conditions:
\begin{enumerate}
    \item\label{thm:sensitivity-computable:finite} All variables occuring in branching conditions are finite.
    \item\label{thm:sensitivity-computable:independent} All defective variables are $p$-independent.
    \item\label{thm:sensitivity-computable:p-free} All dependencies on defective variables are $p$-free.
\end{enumerate}
Then, for every monomial $M$ in program variables descending from $x$, Algorithm~\ref{alg:sensitivity-recurrences} terminates on input \prog{}, $M$ and $p$.
\end{restatable}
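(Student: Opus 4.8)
\emph{Proof idea.}
Algorithm~\ref{alg:sensitivity-recurrences} is sound by construction, so the whole content of Theorem~\ref{thm:sensitivity-computable} is \emph{termination}. If $M$ is $p$-independent the algorithm halts at line~\ref{algo:step:return-independent}, so assume $M$ is $p$-dependent. The plan is to reduce termination to three finiteness facts: (a) only finitely many monomials ever enter $\sens$; (b) only finitely many enter $\mom$; (c) the collected system $\Eqs$ is solvable. Fact (c) is the easy one: moment recurrences are C-finite (Definition~\ref{def:moment-rec}), and reading each $\diffp\E(\cdot)$ as a fresh sequence turns every sensitivity recurrence into the linear, constant-coefficient equation~\eqref{eq:sensitivity-recurrence}; hence, once (a) and (b) hold, $\Eqs$ is a finite system of linear recurrences with constant coefficients (closed under all sequences it mentions), and thus solvable by standard methods. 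By Definition~\ref{def:descendant-monomial} a straightforward induction also shows that every monomial ever placed in $\sens$ or $\mom$ is a descendant of $x$.

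The technical core is a structural corollary of conditions~\ref{thm:sensitivity-computable:independent}--\ref{thm:sensitivity-computable:p-free}. Call a variable $y$ \emph{good} if $y \not\depends{}{} w$ for every defective variable $w$ (so $y$ is in particular effective). First, $x\depends{}{p}y$ implies $y$ good: if $y\depends{}{}w$ with $w$ defective, concatenating a $p$-influenced path witnessing $x\depends{}{p}y$ with a path witnessing $y\depends{}{}w$ yields $x\depends{}{p}w$, contradicting condition~\ref{thm:sensitivity-computable:p-free}. Combining this with Lemma~\ref{lemma:p-inf-moment-rec}, in the moment recurrence $\E(W_{n+1})=\sum_i c_i\E(W^{(i)}_n)$ of any $p$-dependent descendant $W$ of $x$ one obtains: (i) if $p$ occurs in $c_i$, then $W^{(i)}$ is a product of good variables; and (ii) every $p$-dependent $W^{(i)}$ factors as $z\cdot N$, with $N$ a product of good variables and $z$ a single $p$-dependent variable that is again a descendant of $x$ --- were there a second $p$-dependent variable, the ``moreover'' clause of Lemma~\ref{lemma:p-inf-moment-rec} would force all of $W^{(i)}$ to be good.

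Fact (b) then follows quickly. The monomials seeded into $\mom$ (line~\ref{algo:step:add-mom1}) are exactly those $W^{(i)}$ that survive differentiation with a $p$-containing coefficient, hence good monomials by (i). After the normalization of \cite{polar} (legitimate by condition~\ref{thm:sensitivity-computable:finite}), the good variables form an admissible fragment of \prog{} --- no good variable is non-linearly self-dependent, and by condition~\ref{thm:sensitivity-computable:finite} all branching guards (which are reachable from, hence themselves, good variables) are finite --- so the good fragment is moment-computable~\cite{polar}. Consequently the moment-recurrence closure of any good monomial is finite and stays within good monomials, and the second loop processes only finitely many monomials.

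For fact (a) we argue by induction over the iterations of the first loop that every monomial entering $\sens$ after the initial $M$ has the shape $z\cdot N$, where $z$ ranges over the finite set of $p$-dependent variables reachable from the variables of $M$ along direct-dependency edges, and $N$ ranges over the finite moment-recurrence closure (previous paragraph) of the good monomials generated from $M$ and from the assignments of those variables $z$; purely good monomials that arise stay inside that finite closure. Hence $\sens$ stays finite and the first loop halts, which together with (c) finishes the proof. We expect this last step to be the main obstacle: making it precise requires tracking how a monomial evolves under ``take its moment recurrence, differentiate, and discard $p$-independent terms'' and arguing the good factor $N$ cannot grow unboundedly --- exactly the situation the $p$-influenced dependency refinement of Definition~\ref{definition:p-infl-dependency} is engineered to control, in concert with moment-computability of the effective fragment. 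Pinning down this bookkeeping, alongside the (routine but fiddly) treatment of finite-valued branching variables inherited from \cite{polar}, is where the real work lies.
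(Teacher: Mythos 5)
Your reduction of the theorem to termination, your handling of the $\mom$-side (facts (b) and (c)), and your use of Lemma~\ref{lemma:p-inf-moment-rec} with conditions~\ref{thm:sensitivity-computable:independent}--\ref{thm:sensitivity-computable:p-free} to keep defective variables out of the moment recurrences all match the paper's argument in substance. The genuine gap is exactly where you flag it: fact (a), the finiteness of $\sens$. Your shape-tracking claim (every monomial entering $\sens$ is $z\cdot N$ with a single $p$-dependent variable $z$ and a ``good'' factor $N$ drawn from a fixed finite closure) is both unproven and not obviously true --- already in Example~\ref{ex:sensitivity-rec} the monomial $yz$ with \emph{two} $p$-dependent variables enters the sensitivity system, and nothing in your argument bounds how the ``good factor'' evolves under repeated differentiation of moment recurrences. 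You correctly identify this as the main obstacle but leave it open, so the proof is incomplete.

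The missing idea is a simplification that makes the bookkeeping unnecessary. Since the sensitivity recurrence of $W$ has, by Definition~\ref{def:sensitivity-recurrences}, \emph{exactly the same monomials} as the moment recurrence of $W$, the monomials added to $\sens$ on line~\ref{algo:step:add-sens} are a subset of those occurring in moment recurrences of previously processed monomials. The result from \cite{polar} that you invoke only for the good fragment --- every monomial in the moment recurrence of $W$ decreases with respect to a well-founded ordering, provided branching variables are finite and all variables involved are \emph{effective} --- therefore bounds the $\sens$-closure as well, with no need to control the shape of the monomials. What remains is to show that no monomial containing a \emph{defective} variable is ever added to $\sens$, and this follows by the same contradiction pattern you already use for $\mom$: if $W'$ added on line~\ref{algo:step:add-sens} contained a defective variable $y$, then $y$ is $p$-independent by condition~\ref{thm:sensitivity-computable:independent}; if any variable $z$ of $W'$ were $p$-dependent, the ``moreover'' clause of Lemma~\ref{lemma:p-inf-moment-rec} would give $x \depends{}{p} y$, contradicting condition~\ref{thm:sensitivity-computable:p-free}; hence $W'$ is $p$-independent, $\diffp \E(W'_n)$ was replaced by $0$ on line~\ref{algo:step:var-independence}, and $W'$ was never added. (The same style of argument also disposes of the case where the input $M$ itself contains a defective variable, which your proposal skips: such an $M$ is forced to be $p$-independent and the algorithm returns on line~\ref{algo:step:return-independent}.) With this in place, all three claims reduce to ``no defective variables enter the system,'' and termination follows from the single well-founded ordering of \cite{polar} rather than from moment-computability of a good fragment plus an ad hoc shape invariant.
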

\begin{proof}[Sketch]
Algorithm~\ref{alg:sensitivity-recurrences} does not terminate iff infinitely many monomials are added to the set $\sens$ one line~\ref{algo:step:add-sens} or to the set $\mom$ on lines \ref{algo:step:add-mom1} or \ref{algo:step:add-mom2}.
However, every monomial added to these sets decreases with respect to some well-founded ordering.
Hence, only finitely many monomials are added and Algorithm~\ref{alg:sensitivity-recurrences} terminates.
This holds by using a well-founded ordering for monomials of effective variables and showing that all monomials added to $\sens$ or $\mom$ do \emph{not} contain defective variables.
Assuming then that some monomial added to the sets $\sens$ or $\mom$ contains defective variables  leads to contradictions using conditions~\ref{thm:sensitivity-computable:independent} and ~\ref{thm:sensitivity-computable:p-free}, and Lemma~\ref{lemma:p-inf-moment-rec}.
See Appendix~\ref{appendix:main-proof} for more details. 
\qed
\end{proof}

Theorem~\ref{thm:sensitivity-computable} characterizes the class of probabilistic loops for which sensitivity recurrences provide a sound and complete method for sensitivity analysis.
As an immediate corollary, this class of loops is sensitivity computable because every variable is a descendant of itself.
Note that all conditions of Theorem~\ref{thm:sensitivity-computable} are statically checkable:
the concepts of defective variables, $p$-independent variables, and $p$-free dependencies are purely syntactic notions.
Moreover, program variables occurring in branching conditions only admitting finitely many values can be verified using standard techniques based on \emph{abstract interpretation}.

Theorem~\ref{thm:sensitivity-computable} also applies to sensitivity analysis for higher moments: let $v \in \vars$ and $k \in \N$, then Theorem~\ref{thm:sensitivity-computable} covers the sensitivity of $v$'s $k$th moment if $v^k$ is a descendant of some variable.
Otherwise, $v^k$ can be dealt with by introducing a fresh variable $w$ and appending the assignment $w := v^k$ to $\prog{}$'s loop body.

The proof of Theorem~\ref{thm:sensitivity-computable} provides an alternative argument for admissible loops being sensitivity computable (Theorem~\ref{thm:admissible-diff});
as admissible loops do not contain defective variables by definition (Definition~\ref{def:admissible-loops}), the class of loops characterized by Theorem~\ref{thm:sensitivity-computable} subsumes the class of admissible loops.
\section{Experiments and Evaluation}\label{sec:evaluation}
We evaluate our methods for sensitivity analysis for admissible loops (Section~\ref{section:sensitvity:admissible}) and non-admissible loops (Section~\ref{section:sensitvity:non-admissible}).
Our techniques for sensitivity analysis extend the \Polar{} framework~\cite{polar}, which is publicly available at \url{https://github.com/probing-lab/polar}.
For admissible loops, we use the existing functionality of the \Polar{} framework to compute closed-forms for the moments of program variables. 

\paragraph{Experimental Setup.}
We split our evaluation into two parts. 
First, we compute the sensitivities of (higher) moments of program variables for admissible loops by automatically differentiating the closed-forms of the variables' moments (Table~\ref{tab:benchmark:admissible}).
In the second part, we consider our method using sensitivity recurrences, which is also applicable to non-admissible loops (Table~\ref{tab:benchmark:sensitivity-computable}).
To the best of our knowledge, our work provides the first exact and fully automatic tool to compute the sensitivities of (higher) moments of program variables for probabilistic loops.
All our experiments have been executed on a machine with a \SI{2.6}{GHz} Intel i7 (Gen 10) processor and \SI{32}{GB} of RAM with a timeout (TO) of \SI{120}{s}.

\paragraph{Differentiating Closed-Forms.}
Table~\ref{tab:benchmark:admissible} shows the evaluation of our sensitivity analysis technique for admissible loops (Section~\ref{section:sensitivity}) on $11$ benchmarks.
The benchmarks consist of the running example from Figure~\ref{fig:running:1} and parameterized probabilistic loops from the benchmarks in \cite{polar}, coming from literature on probabilistic program analysis \cite{Barthe2016,Chakarov2014,Gretz2013,BartocciKS20A}.
All the benchmarks contain at least one symbolic parameter with respect to which the sensitivities are computed.
Table~\ref{tab:benchmark:admissible} shows that our approach is capable of computing the sensitivities of higher moments of program variables for challenging loops with various characteristics, such as discrete and continuous state spaces as well as drawing from common distributions.
\begin{table}[htb]
  \setlength{\tabcolsep}{0.5em}
  \footnotesize
  \centering
  \begin{tabular}{@{}lllll@{}}
    \toprule
    \textsc{Benchmark} & \textsc{Sensitivity} & \textsc{Rec}, \textsc{RT} & \textsc{Sensitivity} & \textsc{Rec}, \textsc{RT}  \\
    \toprule
    50-Coin-Flips &
    $\partial_p \E(\text{total})$ & 51, 1.56 &
    $\partial_p \E(\text{total}^2)$ & TO,  TO \\ \midrule
    
    Bimodal &
    $\partial_{\text{p}} \E(\text{x})$ & 3, 0.40 &
    $\partial_{\text{p}} \E(\text{x}^2)$ & 5, 0.72 \\ \midrule
    
    Component-Health &
    $\partial_{\text{p1}} \E(\text{obs})$, &  2, 0.61 &
    $\partial_{\text{p1}} \E(\text{obs}^2)$, & 2, 0.62 \\ \midrule
    
    Umbrella &
    $\partial_{\text{u1}} \E(\text{umbrella})$ & 2, 0.97 &
    $\partial_{\text{u1}} \E(\text{umbrella}^2)$ & 2, 0.98 \\ \midrule
    
    Gambler's Ruin  &
    $\partial_{\text{p}} \E(\text{money})$ & 4, 11.2 &
    $\partial_{\text{p}} \E(\text{money}^2)$ & 10, 64.6 \\ \midrule
    
    Hawk-Dove &
    $\partial_{\text{v}} \E(\text{p1bal})$ & 1, 0.34 &
    $\partial_{\text{v}} \E(\text{p1bal}^2)$ & 2, 0.67 \\ \midrule
    
    Las-Vegas-Search & 
    $\partial_{\text{p}} \E(\text{attempts})$ & 2, 0.57 &
    $\partial_{\text{p}} \E(\text{attempts}^2)$ & 4, 7.31 \\ \midrule
    
    1D-Random-Walk  &
    $\partial_{\text{p}} \E(\text{x})$ & 1, 0.27 &
    $\partial_{\text{p}} \E(\text{x}^2)$ & 2, 0.39 \\ \midrule
    
    2D-Random-Walk &
    $\partial_{\text{p\_right}} \E(\text{x})$ & 1, 0.28 &
    $\partial_{\text{p\_right}} \E(\text{x}^2)$ & 2, 0.41 \\ \midrule
    
    Randomized-Response &
    $\partial_{\text{p}} \E(\text{p1})$ & 1, 0.29 &
    $\partial_{\text{p}} \E(\text{p1}^2)$ & 2, 0.42 \\ \midrule
    
    Vaccination (Fig.~\ref{fig:running:1}) &
    $\partial_{\text{vp}} \E(\text{infected})$ & 2, 1.25 &
    $\partial_{\text{vp}} \E(\text{infected}^2)$ & 2, 1.19 \\ \bottomrule
  \end{tabular}
  \vspace{0.5em}
  \caption{Evaluation of the sensitivity computation for $11$ admissible loops by differentiating closed-forms of variable moments. \textsc{Rec}: size of the recurrence system  to compute the variables' moments; \textsc{RT}: runtime in seconds; TO: timeout.}
  \label{tab:benchmark:admissible}
  \vspace{-1.5em}
\end{table}

\paragraph{Sensitivity Recurrences.}
Table~\ref{tab:benchmark:sensitivity-computable} shows the evaluation of our sensitivity analysis technique from Algorithm~\ref{alg:sensitivity-recurrences} using \emph{sensitivity recurrences}.
The benchmarks consist of four non-admissible loops and six admissible loops from Table~\ref{tab:benchmark:admissible}.
Non-admissible loops are known to be notoriously hard to analyze automatically~\cite{unsolvable}.
Table~\ref{tab:benchmark:sensitivity-computable} shows that \emph{sensitivity recurrences} are capable of computing 
the sensitivities for admissible as well as non-admissible loops.
\begin{table}[htb]
  \setlength{\tabcolsep}{0.5em}
  \footnotesize
  \centering
  \begin{tabular}{@{}lllll@{}}
    \toprule
    \textsc{Benchmark} & \textsc{Sensitivity} & \textsc{Rec}, \textsc{RT} & \textsc{Sensitivity} & \textsc{Rec}, \textsc{RT} \\
    \midrule

    Non-Admissible (Fig.~\ref{fig:running:2}) &
    $\partial_{\text{p}} \E(\text{u})$ & 9, 1.40 &
    $\partial_{\text{p}} \E(\text{y}^2)$ &  9, 1.75 \\ \midrule
    
    Non-Admissible-2 &
    $\partial_{\text{par}} \E(\text{y})$ & 5, 6.56 &
    $\partial_{\text{par}} \E(\text{xz})$ & 4, 3.67 \\ \midrule
    
    Non-Admissible-3 &
    $\partial_{\text{p}} \E(\text{total})$ & 6, 12.6 &
    $\partial_{\text{p}} \E(\text{z1}^2)$ & 12, 56.5 \\ \midrule
    
    Non-Admissible-4 &
    $\partial_{\text{p1}} \E(\text{z})$ & 4, 0.48 &
    $\partial_{\text{p1}} \E(\text{cnt}^2)$ & 3, 0.39 \\ \midrule
    
    Bimodal &
    $\partial_{\text{var}} \E(\text{x})$ & 3, 0.28 &
    $\partial_{\text{var}} \E(\text{x}^2)$ & 5, 0.42 \\ \midrule
    
    Component-Health &
    $\partial_{\text{p1}} \E(\text{obs})$ & 3, 0.74 &
    $\partial_{\text{p1}} \E(\text{obs}^2)$ & 3, 0.73\\ \midrule
    
    Gambler's Ruin &
    $\partial_{\text{p}} \E(\text{money})$ & 7, 66.9 &
    $\partial_{\text{p}} \E(\text{money}^2)$ & TO, TO\\ \midrule
    
    Las-Vegas-Search &
    $\partial_{\text{p}} \E(\text{attempts})$ & 3, 0.81 &
    $\partial_{\text{p}} \E(\text{attempts}^2)$ & 7, 13.3 \\ \midrule
    
    Randomized-Response &
    $\partial_{\text{p}} \E(\text{p1})$ & 1, 0.30 &
    $\partial_{\text{p}} \E(\text{p1}^2)$ & 3, 0.40\\ \midrule
    
    Vaccination (Fig.~\ref{fig:running:1}) &
    $\partial_{\text{vp}} \E(\text{infected})$ & 3, 8.26 &
    $\partial_{\text{vp}} \E(\text{infected}^2)$ & 3, 7.85 \\
    \bottomrule
  \end{tabular}
  \vspace{0.5em}
  \caption{Evaluation of the sensitivity computation for $10$ loops ($4$ are non-admissible) using sensitivity recurrences. \textsc{Rec}: size of the recurrence system to compute the variables' sensitivities; \textsc{RT}: runtime in seconds; TO: timeout.}
  \label{tab:benchmark:sensitivity-computable}
  \vspace{-1.5em}
\end{table}
\paragraph{Experimental Summary.}
When comparing both approaches on admissible loops, the differentiation-based approach typically performs better, e.g., on the benchmarks ``Gambler's Ruin'' or ``Vaccination''.
This is not surprising, as the main complexity in both approaches lies in solving the system of recurrences and when using sensitivity recurrences, the number of recurrences tends to be higher.
However, the exact number of recurrences depends on the program structure, and as such, there are cases where the approach using sensitivity recurrences performs equally well, such as in the ``Randomized-Response'' benchmark.
Nevertheless, for the class of loops characterized in Section~\ref{section:sensitvity:non-admissible}, the differentiation-based approach fails, whereas sensitivity recurrences still deliver exact results in a fully automated manner.

Our experiments demonstrate that our novel techniques for sensitivity analysis can compute the sensitivities for a rich class of probabilistic loops with discrete and continuous state spaces, drawing from probability distributions, and including polynomial arithmetic.
Moreover, the technique based on our new notion of \emph{sensitivity recurrences} can compute sensitivities for probabilistic loops for which closed-forms of the variables' moments do not exist.

\section{Related Work}

\paragraph{Sensitivity \& Probabilistic Programs.}
Bayesian networks can be seen as special loop-free probabilistic programs.
The sensitivity of Bayesian networks with discrete probability distribution was studied in \cite{ChanD02,ChanD04}.
The works of \cite{BartocciKS20A,StankovicBK22} provide a framework to analyze properties (sensitivity among others) of \emph{Prob-solvable Bayesian networks}.
In contrast, our work focuses on probabilistic loops with more complex control flow and supports continuous distributions.
In recent years, techniques emerged to manually reason about sensitivities of probabilistic programs, such as program calculi~\cite{Aguirre2021}, custom logics~\cite{BartheEGHS18}, or type systems~\cite{VasilenkoVB22}.
Although applicable to general probabilistic programs, these techniques require manual reasoning or user guidance, while our work focuses on full automation.

A fully-automatic and exact sensitivity analyzer for probabilistic programs with statically bounded loops was proposed in \cite{HuangWM18}.
In comparison, our work focuses on potentially unbounded loops.
The authors of \cite{WangFCDX20} introduce an automatable approach for expected sensitivity based on martingales.
Their technique proves that a given program is Lipschitz-continuous for \emph{some} Lipschitz constant.
In contrast, our work produces \emph{exact} sensitivities for unbounded loops and we characterize a class of loops for which our technique is complete.

\paragraph{Recurrences in Program Analysis.}
Recurrence equations are a common tool in program analysis.
The work of \cite{RCarbonellK04,RCarbonellK07} first introduced the idea of using linear recurrences and Gröbner basis computation to synthesize loop invariants.
This line of work has been further generalized in \cite{Kovacs08,Humenberger18} to support more general recurrences.
In \cite{FarzanK15,Kincaid18} the authors apply linear recurrences to more complex programs and combine it with over-approximation techniques.
The work \cite{BreckCKR20} combines recurrence techniques with template-based methods to analyze recursive procedures.
Recurrence equations were first used for the analysis of probabilistic loops in \cite{BartocciKS19} to synthesize so-called \emph{moment-based invariants}.
This approach was further generalized by \cite{polar}.
Our technique of \emph{sensitivity recurrences} is applicable to loops whose variables' moments do not satisfy linear recurrences. The recent work \cite{unsolvable} studies the synthesis of invariants for such loops, but does not address sensitivity analysis.

\section{Conclusion}\label{sec:conclusion}

We establish a fully automatic and exact technique to compute the sensitivities of higher moments of program variables for probabilistic loops.
Our method is applicable to probabilistic loops with potentially uncountable state spaces, complex control flow, polynomial assignments, and drawing from common probability distributions.
For admissible loops, we utilize closed-forms of the variables' moments obtained through linear recurrences.
Moreover, we propose the notion of \emph{sensitivity recurrences} enabling the sensitivity analysis for probabilistic loops whose moments do not admit closed-forms.
We characterize a class of loops for which we prove \emph{sensitivity recurrences} to be sound and complete.
Our experiments demonstrate the feasibility of our techniques on challenging benchmarks.

\subsubsection*{Acknowledgements}
This research was supported by the Vienna Science and Technology Fund WWTF 10.47379/ICT19018 grant ProbInG, the ERC Consolidator Grant ARTIST 101002685, the Austrian FWF SFB project SpyCoDe F8504, and the SecInt Doctoral College funded by TU Wien.

\bibliographystyle{splncs04}
\bibliography{references}

\begin{thebibliography}{10}
\providecommand{\url}[1]{\texttt{#1}}
\providecommand{\urlprefix}{URL }
\providecommand{\doi}[1]{https://doi.org/#1}

\bibitem{Aguirre2021}
Aguirre, A., Barthe, G., Hsu, J., Kaminski, B.L., Katoen, J., Matheja, C.: {A
  Pre-expectation Calculus for Probabilistic Sensitivity}. In: Proc. of {POPL}
  (2021). \doi{10.1145/3434333}

\bibitem{unsolvable}
Amrollahi, D., Bartocci, E., Kenison, G., Kov{\'{a}}cs, L., Moosbrugger, M.,
  Stankovic, M.: Solving invariant generation for unsolvable loops. In: Proc.
  of {SAS} (2022). \doi{10.1007/978-3-031-22308-2\_3}

\bibitem{Barthe2016}
Barthe, G., Espitau, T., Fioriti, L.M.F., Hsu, J.: {Synthesizing Probabilistic
  Invariants via Doob's Decomposition}. In: Proc. of {CAV} (2016).
  \doi{10.1007/978-3-319-41528-4\_3}

\bibitem{BartheEGHS18}
Barthe, G., Espitau, T., Gr{\'{e}}goire, B., Hsu, J., Strub, P.: {Proving
  Expected Sensitivity of Probabilistic Programs}. In: Proc. of {POPL} (2018).
  \doi{10.1145/3158145}

\bibitem{Barthe2012}
Barthe, G., Gr{\'{e}}goire, B., B{\'{e}}guelin, S.Z.: Probabilistic relational
  hoare logics for computer-aided security proofs. In: Proc. of {MPC} (2012).
  \doi{10.1007/978-3-642-31113-0}

\bibitem{Barthe2020}
Barthe, G., Katoen, J.P., Silva, A.: Foundations of Probabilistic Programming.
  Cambridge University Press (2020). \doi{10.1017/9781108770750}

\bibitem{Barthe2012a}
Barthe, G., K{\"{o}}pf, B., Olmedo, F., B{\'{e}}guelin, S.Z.: Probabilistic
  relational reasoning for differential privacy. In: Proc. of {POPL} (2012).
  \doi{10.1145/2103656.2103670}

\bibitem{BartocciKS19}
Bartocci, E., Kov{\'{a}}cs, L., Stankovic, M.: {Automatic Generation of
  Moment-Based Invariants for Prob-Solvable Loops}. In: Proc. of {ATVA} (2019).
  \doi{10.1007/978-3-030-31784-3\_15}

\bibitem{BartocciKS20A}
Bartocci, E., Kov{\'{a}}cs, L., Stankovic, M.: {Analysis of Bayesian Networks
  via Prob-Solvable Loops}. In: Proc. of {ICTAC} (2020).
  \doi{10.1007/978-3-030-64276-1\_12}

\bibitem{BartocciKS20}
Bartocci, E., Kov{\'{a}}cs, L., Stankovic, M.: {Mora - Automatic Generation of
  Moment-Based Invariants}. In: Proc. of {TACAS} (2020).
  \doi{10.1007/978-3-030-45190-5\_28}

\bibitem{BreckCKR20}
Breck, J., Cyphert, J., Kincaid, Z., Reps, T.W.: {Templates and Recurrences:
  Better Together}. In: Proc. of {PLDI} (2020). \doi{10.1145/3385412.3386035}

\bibitem{Chakarov2014}
Chakarov, A., Sankaranarayanan, S.: {Expectation Invariants for Probabilistic
  Program Loops as Fixed Points}. In: Proc. of {SAS} (2014).
  \doi{10.1007/978-3-319-10936-7\_6}

\bibitem{ChanD02}
Chan, H., Darwiche, A.: When do numbers really matter? J. Artif. Intell. Res.
  (2002). \doi{10.1613/jair.967}

\bibitem{ChanD04}
Chan, H., Darwiche, A.: {Sensitivity Analysis in Bayesian Networks: From Single
  to Multiple Parameters}. In: Proc. of {UAI} (2004)

\bibitem{Chou2020}
Chou, Y., Yoon, H., Sankaranarayanan, S.: Predictive runtime monitoring of
  vehicle models using bayesian estimation and reachability analysis. In: Proc.
  of IROS (2020). \doi{10.1109/IROS45743.2020.9340755}

\bibitem{Durrett2019}
Durrett, R.: Probability: Theory and Examples. Cambridge University Press
  (2019). \doi{10.1017/9781108591034}

\bibitem{everest2003recurrence}
Everest, G., van~der Poorten, A., Shparlinski, I., Ward, T.: {Recurrence
  Sequences}, Math. Surveys Monogr., vol.~104. Amer. Math. Soc., Providence, RI
  (2003)

\bibitem{FarzanK15}
Farzan, A., Kincaid, Z.: {Compositional Recurrence Analysis}. In: Proc. of
  {FMCAD} (2015)

\bibitem{Ghahramani2015}
Ghahramani, Z.: Probabilistic machine learning and artificial intelligence.
  Nature  (2015). \doi{10.1038/nature14541}

\bibitem{Gretz2013}
Gretz, F., Katoen, J., McIver, A.: {Prinsys - On a Quest for Probabilistic Loop
  Invariants}. In: Proc. of {QEST} (2013). \doi{10.1007/978-3-642-40196-1\_17}

\bibitem{HuangWM18}
Huang, Z., Wang, Z., Misailovic, S.: {PSense: Automatic Sensitivity Analysis
  for Probabilistic Programs}. In: Proc. of {ATVA} (2018).
  \doi{10.1007/978-3-030-01090-4\_23}

\bibitem{Humenberger18}
Humenberger, A., Jaroschek, M., Kov{\'{a}}cs, L.: {Invariant Generation for
  Multi-Path Loops with Polynomial Assignments}. In: Proc. of {VMCAI} (2018).
  \doi{10.1007/978-3-319-73721-8\_11}

\bibitem{kauers2011concrete}
Kauers, M., Paule, P.: {The Concrete Tetrahedron - Symbolic Sums, Recurrence
  Equations, Generating Functions, Asymptotic Estimates}. Springer (2011).
  \doi{10.1007/978-3-7091-0445-3}

\bibitem{Kincaid18}
Kincaid, Z., Cyphert, J., Breck, J., Reps, T.W.: {Non-Linear Reasoning for
  Invariant Synthesis}. In: Proc. of {POPL} (2018). \doi{10.1145/3158142}

\bibitem{Kovacs08}
Kov{\'{a}}cs, L.: {Reasoning Algebraically About P-Solvable Loops}. In: Proc.
  of TACAS (2008). \doi{10.1007/978-3-540-78800-3\_18}

\bibitem{polar}
Moosbrugger, M., Stankovic, M., Bartocci, E., Kov{\'{a}}cs, L.: This is the
  moment for probabilistic loops. Proc. {ACM} Program. Lang. ({OOPSLA2})
  (2022). \doi{10.1145/3563341}

\bibitem{Motwani1995}
Motwani, R., Raghavan, P.: Randomized Algorithms. Cambridge University Press
  (1995). \doi{10.1017/cbo9780511814075}

\bibitem{RCarbonellK04}
Rodr{\'{\i}}guez{-}Carbonell, E., Kapur, D.: {Automatic Generation of
  Polynomial Loop Invariants: Algebraic Foundations}. In: Gutierrez, J. (ed.)
  Proc. of {ISSAC} (2004). \doi{10.1145/1005285.1005324}

\bibitem{RCarbonellK07}
Rodr{\'{\i}}guez{-}Carbonell, E., Kapur, D.: {Generating All Polynomial
  Invariants in Simple Loops}. J. Symb. Comput.  (2007).
  \doi{10.1016/j.jsc.2007.01.002}

\bibitem{Selyunin2015}
Selyunin, K., Ratasich, D., Bartocci, E., Islam, M.A., Smolka, S.A., Grosu, R.:
  Neural programming: Towards adaptive control in cyber-physical systems. In:
  Proc. of CDC (2015). \doi{10.1109/CDC.2015.7403319}

\bibitem{StankovicBK22}
Stankovic, M., Bartocci, E., Kov{\'{a}}cs, L.: {Moment-Based Analysis of
  Bayesian Network Properties}. Theor. Comput. Sci.  (2022).
  \doi{10.1016/j.tcs.2021.12.021}

\bibitem{VasilenkoVB22}
Vasilenko, E., Vazou, N., Barthe, G.: Safe couplings: Coupled refinement types.
  In: Proc. of {ICFP} (2022). \doi{10.1145/3547643}

\bibitem{WangFCDX20}
Wang, P., Fu, H., Chatterjee, K., Deng, Y., Xu, M.: {Proving Expected
  Sensitivity of Probabilistic Programs with Randomized Variable-Dependent
  Termination Time}. In: Proc. of {POPL} (2020). \doi{10.1145/3371093}

\end{thebibliography}

\newpage
\appendix
\section{Appendix: Probabilistic Loop Syntax}\label{appendix:syntax}

\begin{figure}
    {
    \footnotesize
    $lop \in \{ and, or \}$,
    $cop \in \{ =, \neq, <, >, \geq, \leq \}$,
    $Dist \in \{ \text{Bernoulli}, \text{Normal}, \text{Uniform}, \dots \}$
    \begin{grammar}
    	<sym> ::= "a" | "b" | $\dots$ <var> ::= "x" | "y" | $\dots$
    	
    	<const> ::= $r \in \R$ | <sym> | <const> ( "+" | "*" | "/" ) <const>
    	
    	<poly> ::= <const> | <var> | <poly> ("+" | "-" | "*") <poly> | <poly>"**n"
    	
    	<assign> ::= <var> "=" <assign\_right> | <var> "," <assign> "," <assign\_right>
    	
    	<categorical> ::= <poly> ("\{"<const>"\}" <poly>)* ["\{"<const>"\}"]
    	
    	<assign\_right> ::= <categorical> | Dist"("<const>$^*$")"
    	
    	<bexpr> ::= "true" ($\star$) | "false" | <poly> <cop> <poly> | "not" <bexpr> | <bexpr> <lop> <bexpr>
    	
    	<ifstmt> ::= "if" <bexpr>":" <statems> ("else if" <bexpr>":" <statems>)$^*$ ["else:" <statems>] "end"
    	
    	<statem> ::= <assign> | <ifstmt> \quad \quad \quad <statems> ::= <statem>$^+$
    	
    	<loop> ::= <statem>* "while" <bexpr> ":" <statems> "end"
    \end{grammar}
    }
    \caption{Grammar describing the syntax of probabilistic loops $\langle \textit{loop} \rangle$. \cite{polar}}
    \label{fig:syntax}
\end{figure}

\section{Appendix: Variable Dependency}\label{appendix:var-dep}
When we refer to $\depends{N}{}$ as the transitive closure of $\ddepends{}{}$ where at least one of the direct dependencies is non-linear we mean the following.
We say that $x \depends{N}{} y$ if 
\begin{itemize}
    \item either $x \ddepends{N}{} y$, or
    \item there exists some $z$ such that either
    \begin{itemize}
        \item $x \ddepends{}{} z$ and $z \depends{N}{} y$, or
        \item $x \ddepends{N}{} z$ and $z \depends{}{} y$
    \end{itemize}
\end{itemize}

\begin{definition}[Variable Dependency]
Let \prog{} be a probabilistic loop and $x, y \in \vars{}$.
We say that $x$ \emph{depends directly on} $y$, in symbols $x \ddepends{}{} y$, if $y$ appears in an assignment of $x$ or an assignment of $x$ occurs in an if-statement where $y$ appears in the if-condition.
Furthermore, we say that the dependency is \emph{non-linear}, in symbols $x \ddepends{N}{} y$, if $y$ appears non-linearly in an assignment of $x$.

\medskip

\noindent
Moreover, the transitive closure $\depends{}{}$ of $\ddepends{}{}$ is the smallest relation satisfying:
\begin{itemize}
    \item If $x \ddepends{}{} y$, then $x \depends{}{} y$, and
    \item if $x \depends{}{} y \land y \depends{}{} z$, then $x \depends{}{} z$.
\end{itemize}

\medskip

\noindent
The relation $\depends{N}{}$ formalizing transitive non-linear dependencies is the smallest relation satisfying:
\begin{itemize}
    \item If $x \ddepends{N}{} y$, then $x \depends{N}{} y$,
    \item if $x \depends{N}{} y \land y \depends{}{} z$, then $x \depends{N}{} z$, and
    \item if $x \depends{}{} y \land y \depends{N}{} z$, then $x \depends{N}{} z$.
\end{itemize}
\end{definition}

\begin{definition}[$p$-Influenced Dependency]
Let \prog{} be a program with parameter $p \in \params$ and $x,y \in \vars$ with $x \ddepends{}{} y$.
Then, the direct dependency between $x$ and $y$ is \emph{$p$-influenced}, in symbols $x \ddepends{}{p} y$ if at least one of the following conditions hold:

\begin{itemize}
    \item An assignment of $x$ contains $y$ and occurs in an if-statement with the if-condition involving $p$ or a $p$-dependent variable.
    \item An assignment of $x$ contains $y$ and is a probabilistic choice with some probability of the choice depending on $p$.
    \item An assignment of $x$ contains a term $c \cdot M \cdot y$ where $c$ is constant and $M$ is a monomial in program variables (possibly containing $y$). Moreover, either $c$ contains $p$ or $M$ contains a $p$-dependent variable.
\end{itemize}

If $x \depends{}{} y$, we write $x \depends{}{p} y$ if at least one of the dependencies from $x$ to $y$ is $p$-influenced. That means $x \depends{}{p} y$ if and only if
\begin{itemize}
    \item $x \ddepends{}{p} y$, or
    \item $x \ddepends{}{p} v \depends{}{} y$ for some variable $v$, or
    \item $x \depends{}{} v \ddepends{}{p} y$ for some variable $v$, or
    \item $x \depends{}{} v_1 \ddepends{}{p} v_2 \depends{}{} y$ for some variables $v_1, v_2$.
\end{itemize}

If $x \depends{}{} y$ and $x \not\depends{}{p} y$ we call the dependency between $x$ and $y$ \emph{$p$-free}.

\end{definition}

\section{Appendix: Moment Recurrences}\label{appendix:moment-recurrences}

For programs of our programming model (cf. Figure~\ref{appendix:syntax}) and a monomial $M$ in program variables, a moment recurrence for $M$ equates the expected value of $M$ at iteration $n{+}1$, that is $\E(M_{n+1})$ to expected values of program variable monomials at iteration $n$ (Definition~\ref{def:moment-rec}).
It is always possible to construct a moment recurrence for $M$ if all variables in all branching conditions are finite valued~\cite{polar}.
For completeness, we restate the process introduced in \cite{polar} on how moment recurrences are constructed.

\paragraph{Normalized Programs.}
To simplify the construction of moment recurrences, we can restrict ourselves to probabilistic loops satisfying the following conditions:
\begin{itemize}
    \item All distribution parameters are constant.
    \item The loop body is a flat sequence of guarded assignments.
    That means, every assignment is of the form $x = e_1 [C] e_2$, where $x$ is a program variable and $C$ is a boolean condition over program variables.
    The expression $e_1$ is either a distribution or a probabilistic choice of polynomials and is assigned to the variable $x$ if $C$ evaluates to \emph{true}.
    The expression $e_2$ is a single variable and is assigned to $x$ if $C$ evaluates to \emph{false}.
    \item In the loop body, every program variable is only assigned once.
\end{itemize}

Programs satisfying these conditions are called \emph{normalized}.
Every program from our program model can be transformed into a normalized program while maintaining the joint distribution of program variables as well as the dependencies between program variables as shown in \cite{polar}. 

\paragraph{Construction of Moment Recurrences.}
Let \prog{} be a normalized program with all program variables in branching conditions (or guards) being finitely-valued.
Given a monomial in program variables $M$, the moment recurrence for $M$ is constructed by starting with the expression $\E(M_{n+1})$ and replacing variables contained in the expression by their assignments bottom-up as they appear in the loop-body of \prog{}.
Throughout, the linearity of expectation is used to convert expected values of polynomials into expected values of monomials.
Because \prog{} is a normalized program, its loop body is a sequence of guarded assignments.
Assume $x$ is a program variable appearing in $\E(M_{n+1})$.
Therefore, $M = M' \cdot x_{n+1}^k$ for some monomial $M'$ free of $x$.
The assignment of $x$ is either a probabilistic choice of polynomials, 
$$x = a_0 \{p_0\} \dots \{p_{i-1}\} a_i\ [C]\ d,$$

for polynomials $a_0, \dots, a_i$ and constant probabilities $p_0, \dots, p_{i-1}$, or the assignment of $x$ is drawing from a known distribution, 
$$x = Dist\ [C]\ d.$$

Because \prog{} is a normalized program, $C$ is a boolean condition and $d$ is a program variable.
In case the assignment of $x$ is a (guarded) probabilistic choice of polynomials, $\E(M_{n+1})$ is rewritten to
\begin{equation*}
    \E(M_{n+1}) = \E(M'\cdot x^k_{n+1}) = \E\left(M'\left( d [\lnot C] + \sum p_i a_i^k [C]\right)\right).
\end{equation*}

The second option is that the assignment of $x$ is a (guarded) draw from a distribution. In this case $\E(M_{n+1})$ is rewritten to
\begin{equation*}
    \E(M_{n+1}) = \E(M'\cdot x^k_{n+1}) = \E\left(M' d [\lnot C]\right) + \E\left(M'[C]\right)\E\left(Dist^k\right).
\end{equation*}

In the above expressions $[C]$ denotes the Iverson bracket, evaluating to $1$ if $C$ holds and to $0$ otherwise.
By assumption, all program variables in branching conditions, and hence all variables in $C$, are finitely-valued.
Therefore, we can replace all occurrences of $[C]$ and $[\lnot C]$ in the above equations with polynomials over variables occurring in the condition $C$ as described in \cite{polar}.
Throughout the process, the linearity of expectation is used to turn expected values of polynomials into linear combinations of expected values of monomials.
By applying the replacement of program variables by their assignments for every variable bottom-up as they appear in the loop body, we end up with a the moment recurrence for $\E(M_{n+1})$.

\section{Appendix: Proof of Theorem~\ref{thm:sensitivity-computable}}\label{appendix:main-proof}

\sensitivitycomputable*

\begin{proof}
First, we cover the case where the monomial $M$ contains a defective variable.
If $M$ contains a defective variable $y$, then $y$ must be $p$-independent by condition~\ref{thm:sensitivity-computable:independent}.
As $M$ is a descendant of $x$, we have $x \depends{}{} y$.
Moreover, if there exists a $p$-dependent variable $z$ in $M$ different from $y$, Lemma~\ref{lemma:p-inf-moment-rec} gives us $x \depends{}{p} y$.
However, $x \depends{}{p} y$ contradicts our condition~\ref{thm:sensitivity-computable:p-free} that all dependencies on defective variables must be $p$-free.
Hence, the monomial $M$ is $p$-independent and Algorithm~\ref{alg:sensitivity-recurrences} terminates on line~\ref{algo:step:return-independent}.

\medskip

For the second, more involved case, assume all variables in $M$ are effective and possibly $p$-dependent.
Algorithm~\ref{alg:sensitivity-recurrences} does not terminate if and only if the algorithm adds infinitely many monomials $W'$ to the set $\sens$ one line~\ref{algo:step:add-sens} or to the set $\mom$ on lines \ref{algo:step:add-mom1} or \ref{algo:step:add-mom2}.
Every monomial $W'$ added to the set $\mom$ occurs in the moment recurrence of $W$ from line~\ref{algo:step:pick2}.
Moreover, every monomial $W'$ added to the set $\sens$ occurs in the sensitivity recurrence of $W$ from line~\ref{algo:step:pick1} and hence also occurs in the moment recurrence of $W$.
That is because the sensitivity recurrence and the moment recurrence of $W$ share the same monomials (Definition~\ref{def:sensitivity-recurrences}).

In \cite{polar}, the authors showed that every monomial $W'$ occurring in the moment recurrence of a monomial $W$ decreases with respect to a well-founded ordering if (A) all variables in branching conditions are finite, and (B) all variables in $W$ and $W'$ are effective.
Premise (A) matches our condition~\ref{thm:sensitivity-computable:finite}.
Therefore, to show that only finitely many monomials are added to $\sens$ and $\mom$ (and hence establish termination of Algorithm~\ref{alg:sensitivity-recurrences}), \emph{it suffices to show that all monomials $W'$ added to $\sens$ and $\mom$ only contain effective variables}.

First, note that every monomial $W'$ added to $\sens$ or $\mom$ is a descendant of the variable $x$.
This holds because the algorithm starts with $\sens=\{M\}, \mom = \emptyset$, the monomial $M$ is a descendant of $x$, and $W'$ occurs in the moment recurrence of some $W \in \sens \cup \mom$.

\medskip

\emph{Claim: All monomials $W'$ added to $\sens$ on line~\ref{algo:step:add-sens} only contain effective variables.}
Towards a contradiction, assume some monomial $W'$ is added to $\sens$ on line~\ref{algo:step:add-sens} and $W'$ contains a defective variable $y$.
By condition~\ref{thm:sensitivity-computable:independent}, $y$ is $p$-independent.
By Lemma~\ref{lemma:p-inf-moment-rec} and condition~\ref{thm:sensitivity-computable:p-free}, all variables in $W'$ are $p$-independent.
Hence, the monomial $W'$ is $p$-independent and $\diffp \E(W'_n)$ was replaced by $0$ on line~\ref{algo:step:var-independence}.
Therefore, $W'$ could not have been added to $\sens$ on line~\ref{algo:step:add-sens}.

\medskip

\emph{Claim: All monomials $W'$ added to $\mom$ on line~\ref{algo:step:add-mom1} only contain effective variables.}
Towards a contradiction, assume some monomial $W'$ is added to $\mom$ on line~\ref{algo:step:add-mom1} and $W'$ contains a defective variable $y$.
The monomial $W'$ occurs in the sensitivity recurrence of $W$ (fixed at line~\ref{algo:step:pick1}) with coefficient $(\nicefrac{\partial}{\partial p}c)$.
Therefore, $W'$ occurs in the moment recurrence of $W$ with coefficient $c$.
By Lemma~\ref{lemma:p-inf-moment-rec} and condition~\ref{thm:sensitivity-computable:p-free}, the constant $c$ does not contain the parameter $p$.
Hence, $(\nicefrac{\partial}{\partial p}c) = 0$ and $\E(W'_n)$ was replaced by $0$ on line~\ref{algo:step:const-independence}.
Therefore, $W'$ could not have been added to $\mom$ on line~\ref{algo:step:add-mom1}.

\medskip

\emph{Claim: All monomials $W'$ added to $\mom$ on line~\ref{algo:step:add-mom2} only contain effective variables.}
First, note that for all monomials $W'$ added to $\mom$ on line~\ref{algo:step:add-mom1}, the corresponding coefficient  $(\nicefrac{\partial}{\partial p}c) \neq 0$ and hence $c$ must contain the parameter $p$.
Therefore, by Lemma~\ref{lemma:p-inf-moment-rec}, for all variables $y$ in all monomials $W'$ added to $\mom$ in the first while-loop, we have $x \depends{}{p} y$.
By transitivity of $\depends{}{p}$, we get $x \depends{}{p} y$ for all variables $y$ in all monomials $W'$ added to $\mom$ on line~\ref{algo:step:add-mom2}.
Therefore, all $W'$ added to $\mom$ on line~\ref{algo:step:add-mom2} cannot contain defective variables by condition~\ref{thm:sensitivity-computable:p-free}.\qed
\end{proof}

\newpage
\section{Appendix: Non-Admissible Benchmarks}\label{appendix:non-admissible-benchmarks}

\begin{figure}
\begin{subfigure}[b]{0.4\linewidth}
\noindent
\begin{minipage}{\textwidth}
\begin{lstlisting}[style=appendix-code-style]
u,w,x,y,z = 0,1,2,3,4
while $\star$:
  z = z+p$^2$ {1/2} z+p
  y = y - 5*p*z
  w = 5*w + x$^2$
  x = 5 + w+x
  u = x + p*z*y
end
\end{lstlisting}
\end{minipage}
\caption{Non-Admissible (Fig.~\ref{fig:running:2})}
\end{subfigure} \hfill
\begin{subfigure}[b]{0.52\linewidth}
\noindent
\begin{minipage}{\textwidth}
\begin{lstlisting}[style=appendix-code-style]
x,y,z,var = 1,2,a,0
d1,d2 = 5,3
run = -1
while $\star$:
  run = 2*run + z$^2$
  z = z+1
  d1,d2 = d1*d2+3, d1+z
  x = 3*x + d2 + par$^2$*z + run*z
  y = 3*(x-y) + par$^2$*run
end
\end{lstlisting}
\end{minipage}
\caption{Non-Admissible-2}
\end{subfigure}
\begin{subfigure}[b]{0.52\linewidth}
\noindent
\begin{minipage}{\textwidth}
\begin{lstlisting}[style=appendix-code-style]
cnt,total = 0,0
x1,x2 = 1,2
y1,y2 = 0,3
z1,z2 = 1,5
while $\star$:
  cnt = cnt + 1
  x1 = x1$^2$ + q*x2
  x2 = y1 + cnt + q
  y1 = r*(y1-cnt) + y2*cnt
  y2 = r*y1 + 5
  z1 = cnt$^2$ - cnt + p*z1
  z2 = z1*3 - 5*(z2-p)
  total = x2 + y2 + z2
end
\end{lstlisting}
\end{minipage}
\caption{Non-Admissible-3}
\end{subfigure}\hfill
\begin{subfigure}[b]{0.4\linewidth}
\noindent
\begin{minipage}{\textwidth}
\begin{lstlisting}[style=appendix-code-style]
y,x,z,cnt = 0,0,0,0
while $\star$:
  x = $\text{DiscreteUniform}$(1,5)
  if x < 3:
    inc = $\text{Bernoulli}$(p1)
    cnt = cnt + inc
  else:
    inc = $\text{Bernoulli}$(p2)
    cnt = cnt - inc
  end
  f = $\text{DiscreteUniform}$(0,10)
  y = y$^2$ + x * f
  z = cnt$^2$ - 3*y$^2$ + x$^3$
end
\end{lstlisting}
\end{minipage}
\caption{Non-Admissible-4}
\end{subfigure}
\caption{Four parameterized non-admissible loops used for our experiments (Section~\ref{sec:evaluation}).}
\end{figure}

\end{document}